\documentclass[letterpaper,journal]{IEEEtran}

\usepackage[english]{babel}

\usepackage{graphicx}
\usepackage{xcolor}
\usepackage{amsmath}
\usepackage{bm}
\usepackage{amsfonts}
\usepackage{amssymb}
\usepackage{algorithm}
\usepackage{algpseudocode}
\usepackage{dsfont}
\usepackage{amsthm}
\usepackage{comment}
\usepackage{cite}
\usepackage{booktabs}
\usepackage{placeins}

\newtheorem{theorem}{Theorem}
\newtheorem{lemma}{Lemma}
\newtheorem{corollary}{Corollary}
\newtheorem*{assumption}{Assumption}
\newtheorem*{remark}{Remark}

\title{Age of Information Optimization for Status Updates in Integrated Sensing and Communication Systems}

\author{Marco Zanni, Mohamad Assaad, Touraj Soleymani
\thanks{Marco Zanni (marco.zanni@centralesupelec.fr) and Mohamad Assaad (Mohamad.Assaad@centralesupelec.fr) are with the Laboratory of Signals and Systems, CentraleSup\'{e}lec, University of Paris-Saclay, 91190 Gif-sur-Yvette, France. Touraj Soleymani is with the City St George's School of Science and Technology, University of London, London EC1V~0HB, United Kingdom (touraj.soleymani@citystgeorges.ac.uk).}%
}

\begin{document}
\maketitle

\begin{abstract}
In this paper, we study age of information (AoI) optimization for status updating in an integrated sensing and communication (ISAC) system. We consider a discrete-time architecture in which a base station interacts with a physical environment and a remote monitor, and at each time slot can operate in one of three modes: sensing, communication, or joint sensing and communication. Each mode is unreliable and incurs a different operational cost. The objective is to minimize a discounted infinite-horizon cost that combines the AoI at the monitor with action-dependent sensing and communication costs.
For the single source scenario, we formulate the problem as a Markov decision process with a two-dimensional AoI state and prove that the optimal stationary policy admits an ordered threshold structure in the AoI state space. Since the AoI evolves over an infinite space, we truncate the state space to reduce complexity and rigorously bound the resulting error. The analysis analytically determines the truncation size needed to keep the error below a given threshold. For the multi-source scenario, we formulate the scheduling problem as a restless multi-armed bandit. We develop both a Whittle index policy and an approximate Whittle index policy for scheduling under two different regimes, one where indexability is guaranteed, and one where it is not.
Numerical results illustrate the structure of the optimal policy in the single-source case and show that the proposed approximate Whittle index policy performs comparably to the Whittle index policy in the indexable regime, while remaining effective beyond it.

\end{abstract}

\begin{IEEEkeywords}
age of information, real-time monitoring, networks, status updating, integrated sensing and communication.
\end{IEEEkeywords}

\section{Introduction}\label{sec:introduction}

Real-time remote monitoring systems are a fundamental component of modern cyber-physical infrastructures, allowing remote controllers, operators, and decision-makers to keep awareness of dynamic physical processes. Their importance spans a wide range of applications, including industrial automation, smart transportation, environmental surveillance, and networked control. In such systems, status information is collected from one or more sources and delivered over possibly unreliable communication channels, so that decisions can be made on the basis of the most recent available observations. However, successful delivery alone is not sufficient: the information available at the monitor must also be timely, since outdated status updates may provide an inaccurate representation of the current system state and degrade the quality of subsequent decisions.

A natural metric to quantify this notion of timeliness is the age of information (AoI), introduced in \cite{kaul2012realtime}. If $h(t)$ denotes the generation time of the most recently received update available at a receiver at time $t$, then the AoI is defined as
\[
\Delta(t)=t-h(t).
\]
The AoI measures how much time has passed since the generation of the freshest update currently available at the destination. Since it directly measures information staleness at the receiver, the AoI has become a standard timeliness metric in status updating systems, and it has been extensively studied in queueing systems, scheduling problems, wireless networks, and remote estimation settings \cite{yates2021aoi,sun2017update,kriouile2022whittle}. In these settings, AoI serves as a natural framework for analyzing trade-offs among update frequency, transmission reliability, and resource usage.

The AoI has also stimulated the development of several related metrics that extend the notion of timeliness. The age of incorrect information (AoII) was introduced in \cite{maatouk2023aoii} to account for both staleness and correctness: the AoII grows only when the receiver's estimate of the current system state is incorrect. Likewise, the value of information (VoI) considers how much a new observation improves performance in a control or estimation loop \cite{soleymani2022voi_quant,soleymani2023voi_global}  emphasizes that information should be evaluated not only by whether it is delivered, but also by how useful it is for the task of interest. To this day, the AoI remains the canonical and most straightforward metric when the main objective is to control information freshness.

Among the many problems studied in the AoI literature, scheduling takes a central role. When several users share limited communication resources, the system must decide which user should be served at each time in order to maintain freshness across the network. This question has been investigated in several settings, including broadcast networks, random access systems, and more general AoI optimization problems \cite{kadota2018broadcast_aoi,sun2020closed_form_whittle,tripathi2024whittle_functions,zhou2024active_time_aoi,xu2024aoi_federated_learning,liu2025qaoi_whittle}. In these multi-user settings, the exact dynamic programming solution is often computationally out of reach, which makes low-complexity scheduling rules particularly attractive.

A prominent approach in this direction is given by restless multi-armed bandit formulations and Whittle index policies. Following the seminal work \cite{weber1990restless_bandits}, Whittle index policies have received significant attention in wireless scheduling and, more recently, in the AoI literature. In practice, at each time step a priority index is assigned to every user, and this induces a low-complexity policy where the scheduler can address the users with the highest priority. This heuristic is generally well-performing. In the AoI context, Whittle index policies have been used for broadcast scheduling, random access, federated learning, and query-aware uplink systems \cite{kadota2018broadcast_aoi,sun2020closed_form_whittle,tripathi2024whittle_functions,zhou2024active_time_aoi,xu2024aoi_federated_learning,liu2025qaoi_whittle}. Their performance is not only empirical: in some settings they have also been shown to be asymptotically optimal or even globally optimal, as in \cite{kriouile2022whittle,larranaga2017asymptotically}. The simplicity and strong performance of the Whittle index policy makes it interesting to analyze in our model.

The present paper is motivated by a setting in which freshness is determined by two mechanisms: information about the current state of a physical process must first be acquired, and then delivered to a remote monitor. The monitor cannot directly observe the physical process, but has to rely on a central base station that collects status information on its behalf. This is quite frequent in practice since the monitor cannot have a view of the whole environment required for monitoring. In particular, the base station acquires information about the current state of the physical process through a sensing mechanism, e.g.  by sending a radar signal to collect updated observations about  the environment useful for the physical process.

In such a system, the scheduler at the base station must decide not only when to communicate, but also when to refresh its own local knowledge of the process. The base station can operate in three modes: pure sensing, pure communication, and a joint mode in which it acquires fresh status information and communicates previously acquired information within the same slot. The decision maker must choose among these three competing actions with different costs and success probabilities.  Real-world applications can be, for instance:
\begin{itemize}
    \item \textit{Remote navigation:} a remote operator or controller must track the state of a vehicle and of its surrounding environment, including nearby obstacles. The monitored physical process is the navigation scene, while the base station acquires fresh information through onboard or roadside sensors, and then delivers the acquired status to the remote monitor.

    \item \textit{Industrial robotic cells:} a control room must monitor the state of a production area where mobile robots and human workers coexist. The physical process includes the positions and operating conditions of these elements, while the base station obtains fresh observations from sensors before reporting them to the monitor.
\end{itemize}

An integrated sensing and communication (ISAC) architecture provides a natural operational setting for this problem, since the same platform is used both to acquire and to deliver information \cite{liu2020joint,luong2021rrm_jrc,zhang2025intelligent_isac,wen2025iscc_survey}. Much of the existing ISAC literature focuses on physical-layer metrics such as waveform design, beamforming, interference management, estimation error, and throughput. Our goal is different in that we do not seek to optimize the physical layer operation of an ISAC system, but to characterize the AoI-optimal scheduling rule. AoI and scheduling ISAC layers have received less attention in the literature. Recent examples include AoI optimization in multi-UAV, UAV-enabled, and air-ground ISAC systems \cite{zhou2024multiuav_aoi,bai2025uav_isac_aoi,liu2025joint_aoi_iscc,mei2025aoi_jamming}. \cite{fan2026multimodal} applies AoI metrics to an ISAC setting, but focuses on the physical layer of the system and applies it specifically to vehicular networks. A preliminary two-action formulation of our freshness problem was considered in \cite{soleymani2026freshness_isac}, where the base station can only choose between separate sensing and communication operations. The present paper studies a three-action model by introducing a joint sensing and communication action, which is motivated by the recent settings in ISAC technology in which  a transmitter can transmit a communication signal that can be useful to collect sensing information. 
Unlike the two-mode scenario, the third joint communication-and-sensing mode fundamentally changes the problem geometry, making standard submodularity approaches inapplicable. Furthermore, this paper extends the analysis to multiple physical process–monitor pairs. 

\begin{figure}
    \centering
    \includegraphics[width=0.5\linewidth]{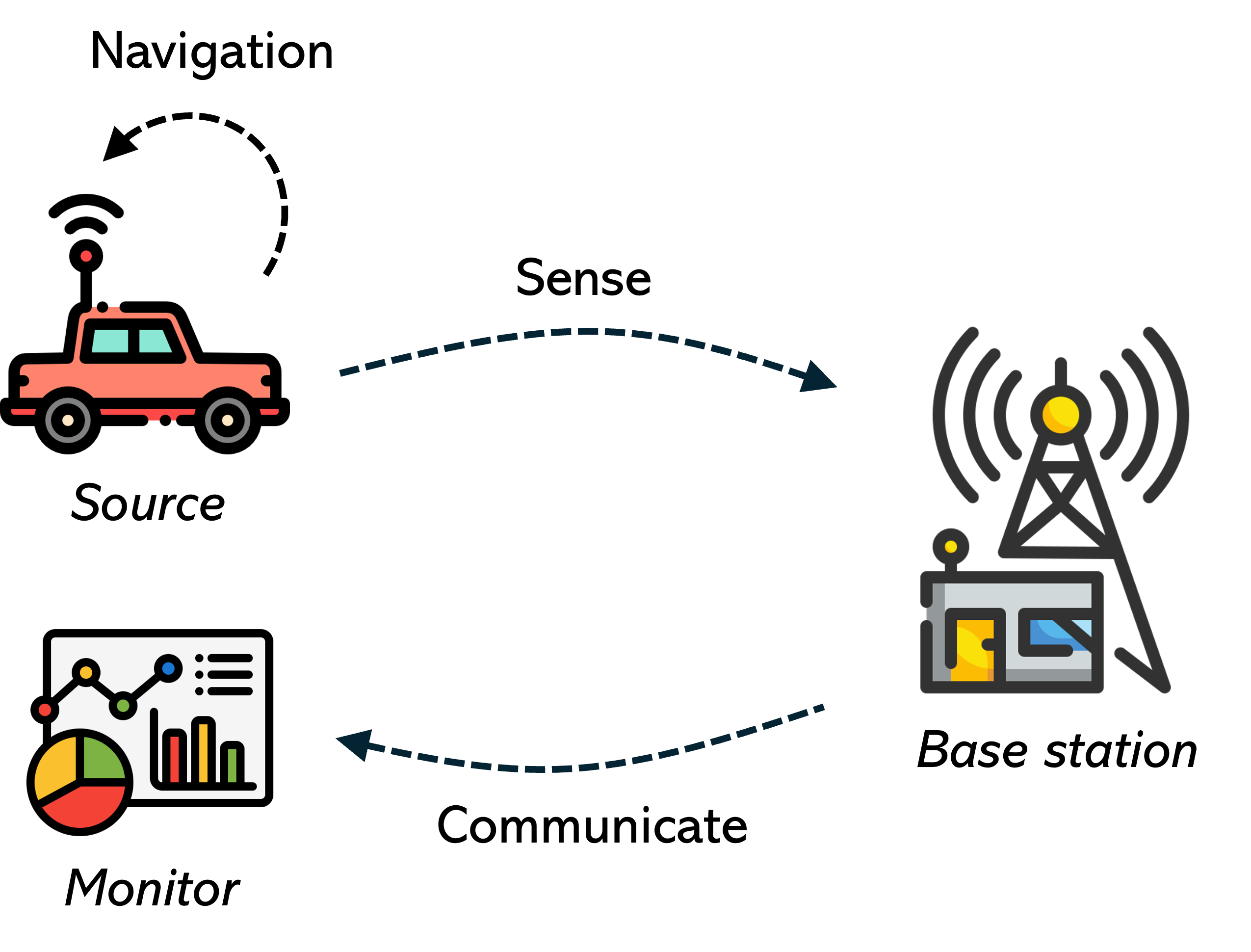}
    \caption{Representative example of an ISAC architecture for remotely monitoring a ground vehicle.}
    \label{fig:ISAC-model}
\end{figure}

In this paper, we study AoI optimization for status updates in such an architecture. We consider a discrete-time system composed of a source, an ISAC-enabled base station, and a remote monitor (see Figure \ref{fig:ISAC-model} for a representative example). The source can track the state of an underlying physical process, while the monitor cannot, and must rely on status information collected by the base station. At every time step, the base station can perform sensing to acquire fresh information from the source about the current state of the process, communicate previously acquired status information to the monitor, or perform sensing and communication simultaneously. These operations are unreliable and incur different costs. To capture the resulting trade-off, we formulate a discounted infinite-horizon Markov decision process whose state is the pair formed by the AoI at the monitor and the AoI at the base station. The objective is to minimize a long-term cost that combines information staleness at the monitor with action costs. We also extend the analysis to a multiple process-monitor setting in which one base station must schedule several monitored processes.

\subsection{Contributions and Organization}

The main contribution of this paper is a structural and algorithmic study of the above problem in both single process-monitor and multiple process-monitor pairs scenarios.
\begin{itemize}
    \item For the single process-monitor case, we show that the optimal stationary policy admits an ordered switching structure in the AoI state space. This yields a policy described by two switching thresholds. We also quantify the error induced by truncating the unbounded state space and derive a simple closed-form criterion for selecting the truncation level for the numerical computation of the optimal policy. The proof of the threshold structure of the optimal policy significantly differs from standard AoI and/or MDP formulations, since the optimal value function is not submodular in our problem.
    \item We then extend the analysis to a multiple process-monitor pairs scenario in which one base station must share its sensing and communication capability among several subsystems. This leads to a restless multi-armed bandit formulation. By introducing an idle action and applying a Lagrangian relaxation, we obtain a relaxed single-arm problem that enables the construction of Whittle-type scheduling rules. We provide a sufficient condition under which the relaxed problem is Whittle-indexable, so that an exact Whittle index policy can be defined. In contrast to classical AoI scheduling formulations, each arm involves two coupled AoI variables and three active sensing-communication modes, so the scheduling rule must jointly determine which sources to activate and which ISAC action to choose.
    \item Finally, we develop an approximate Whittle index policy based on linear interpolation. This approximate construction is computationally attractive, can be used in non-indexable regimes, and comes with an explicit approximation bound. Numerical results illustrate both the threshold geometry of the optimal single process-monitor policy and the effectiveness of the proposed heuristic policies in the multiple process-monitor pairs scenario.
\end{itemize}

The remainder of the paper is organized as follows. Section \ref{sec:prolem-formulation} introduces the system model and formulates the optimization problem. Section \ref{sec:single-source-scenario} studies the single-source scenario and establishes the structure of the optimal policy together with the truncation bound. Section \ref{sec:multi-source-scenario} addresses the multi-source scenario and develops the index-based scheduling policies. Section \ref{sec:numerical-results} presents numerical results. Section \ref{sec:conclusions} concludes the paper.

\section{Problem Formulation}\label{sec:prolem-formulation}

We study a discrete-time remote monitoring system supported by an integrated sensing and communication (ISAC) infrastructure. The architecture consists of three entities: a physical process, an ISAC-enabled base station, and a remote monitor. The monitor aims to track the evolution of the process state, useful for the monitor that has a  limited sensing capability and does not have a good view/observation of the environment. The monitor must rely on information provided by the base station. At each time slot, the base station selects one of three operating modes: sensing the current process state, transmitting previously acquired status information to the monitor, or performing sensing and communication simultaneously (i.e; acquiring new sensing while transmitting the previous sensing status to the monitor). The transmission channels are lossy, so the chosen operation may fail. The objective is to design a scheduling policy for the base station that balances information freshness and operational cost, namely by keeping the AoI at the monitor low while accounting for sensing and communication costs.

Our focus is on the decision and scheduling layer of an ISAC monitoring system, rather than on physical layer design. Accordingly, sensing, communication, and joint sensing and communication are modeled through success probabilities and operational costs. In particular, the joint action is treated as a single effective mode: when it succeeds, it both delivers the previously sensed information and refreshes the base station information; when it fails, neither update is performed. This abstraction allows us to isolate the effect of the additional reset mechanism on the AoI dynamics.

\subsection{System Model}

Let $Z_k$ denote the state of the physical process/source at time $k$. At the beginning of each slot, the base station selects an action
\[
u_k\in\{\mathsf{sense},\mathsf{comm},\mathsf{joint}\},
\]
corresponding, respectively, to sensing the process/source, transmitting previously acquired information to the monitor, or performing both operations simultaneously. These three actions incur fixed costs $c_0\geq 0$, $c_1\geq 0$, and $c_2\geq 0$.

The key difference among the three modes lies in how information is handled within a slot. A sensing action attempts to acquire the current process/source state. A communication action attempts to forward the most recent state information already available at the base station. A joint sensing and communication action combines these two operations: during the same slot, the base station transmits its previously available estimate while also attempting to collect a fresh measurement of the source.

Let $X_k$ denote the measurement obtained by the base station at time $k$, whenever sensing is performed successfully.

If $u_k=\mathsf{sense}$, the base station sends a radar signal in order to collect fresh information regarding the state of the process. The sensing operation succeeds with probability $\lambda_0\in(0,1)$, and the sensing outcome satisfies
\[
\Pr(X_k = Z_k \mid Z_k, u_k = \mathsf{sense}) = \lambda_0,
\]
\[
\Pr(X_k = \emptyset \mid Z_k, u_k = \mathsf{sense}) = 1 - \lambda_0.
\]
The base station stores the most recent successfully sensed state. Let $\tilde Z_k$ denote the state information stored at the base station after the ISAC action at time $k$.

If $u_k=\mathsf{comm}$, the base station attempts to transmit the most recent locally available estimate, denoted by $\tilde Z_{k-1}$, to the remote monitor. The communication operation succeeds with probability $\lambda_1\in(0,1)$. Let $Y_k$ denote the received packet at the remote monitor. Then
\[
\Pr(Y_k = \tilde Z_{k-1} \mid \tilde Z_{k-1}, u_k = \mathsf{comm}) = \lambda_1,
\]
\[
\Pr(Y_k = \emptyset \mid \tilde Z_{k-1}, u_k = \mathsf{comm}) = 1 - \lambda_1.
\]

If $u_k=\mathsf{joint}$, the base station transmits the estimate available from the previous slot while simultaneously senses the current source state. In this case, the joint operation succeeds with probability $\lambda_2\in(0,1)$, and the outcome satisfies
\[
\Pr(X_k = Z_k, Y_k = \tilde Z_{k-1} \mid Z_k, \tilde Z_{k-1}, u_k = \mathsf{joint}) = \lambda_2,
\]
\[
\Pr(X_k = \emptyset, Y_k = \emptyset \mid Z_k, \tilde Z_{k-1}, u_k = \mathsf{joint}) = 1 - \lambda_2.
\]
The state information stored at the base station evolves according to
\[
\tilde Z_k =
\begin{cases}
X_k, & \text{if } u_k \in \{\mathsf{sense},\mathsf{joint}\} \text{ and } X_k \neq \emptyset,\\
\tilde Z_{k-1}, & \text{otherwise.}
\end{cases}
\]

Let $\hat Z_k$ denote the state estimate available at the remote monitor after the ISAC action at time $k$. The monitor can be updated only when an action involving communication is selected and the transmission succeeds. Its state therefore evolves according to
\[
\hat Z_k =
\begin{cases}
Y_k, & \text{if } u_k \in \{\mathsf{comm}, \mathsf{joint}\} \text{ and } Y_k \neq \emptyset,\\
\hat Z_{k-1}, & \text{otherwise.}
\end{cases}
\]
In particular, under a pure sensing action the monitor receives no packet and keeps its previous estimate.

For notational convenience, we also introduce the binary outcome variable $\eta_k\in\{\mathsf{succ},\mathsf{fail}\}$, which indicates whether the selected operation at time $k$ is successful. Its conditional distribution is given by
\begin{equation}
\Pr(\eta_k = \mathsf{succ} \mid u_k) =
\begin{cases}
\lambda_0, & \text{if } u_k = \mathsf{sense},\\
\lambda_1, & \text{if } u_k = \mathsf{comm},\\
\lambda_2, & \text{if } u_k = \mathsf{joint},
\end{cases}
\end{equation}
with $\Pr(\eta_k = \mathsf{fail} \mid u_k) = 1 - \Pr(\eta_k = \mathsf{succ} \mid u_k)$.

This model separates three distinct uses of the ISAC resource: information acquisition through sensing, information delivery through communication, and the combined execution of the two within the same slot. Throughout the paper, we assume that $\lambda_2\leq \lambda_0\leq \lambda_1$ and $c_0\leq c_1\leq c_2$. These assumptions reflect the fact that communication is typically more reliable than sensing because of coding and retransmission mechanisms, whereas simultaneous sensing and communication is generally the most demanding operating mode in terms of both reliability and cost.

\subsection{Freshness Metric}

We track information freshness separately at the remote monitor and at the base station. For $i \in \{m,b\}$, where $m$ and $b$ index the remote monitor and the base station, let $\alpha^i_k$ denote the AoI at time $k$ before the ISAC action, and $\alpha^i_{k^+}$ denote the AoI at time $k$ after the ISAC action. These variables quantify the freshness of the state information $\hat Z_k$ available at the remote monitor and $\tilde Z_k$ stored at the base station.

The post-action AoI values depend on which operation is selected and on whether that operation succeeds. In particular, when the selected action is successful, the pair $(\mathrm{AoI}^m_{k+}, \mathrm{AoI}^b_{k+})$ evolves as follows:
\begin{equation}
(\mathrm{AoI}^m_{k+}, \mathrm{AoI}^b_{k+})=
\begin{cases}
(\mathrm{AoI}^m_k, 0), & \text{if } u_k=\mathsf{sense},\\
(\mathrm{AoI}^b_k, \mathrm{AoI}^b_k), & \text{if } u_k=\mathsf{comm},\\
(\mathrm{AoI}^b_k, 0), & \text{if } u_k=\mathsf{joint}.
\end{cases}
\end{equation}

If the selected operation is unsuccessful, no fresh information is acquired or delivered, and therefore
\[
(\mathrm{AoI}^m_{k+}, \mathrm{AoI}^b_{k+})=
(\mathrm{AoI}^m_k, \mathrm{AoI}^b_k).
\]

Finally, between time instants $k+$ and $k+1$, the AoI increases by one unit at both entities, so that
\[
\mathrm{AoI}^i_{k+1} = \mathrm{AoI}^i_{k+} + 1.
\]

This representation highlights the fact that the remote monitor and the base station may carry information with different freshness levels, depending on whether the selected action refreshes local information, delivered information, or both.

\subsection{Performance Criterion and Optimization Problem}

The dynamics introduced above induce a discounted infinite-horizon Markov decision process. At each time $k$, the ISAC system selects an action $u_k$. For compactness, we use the conventions $\alpha^i_k:=\mathrm{AoI}^i_k$, $u_k=0 \Leftrightarrow u_k=\mathsf{sense}$, $u_k=1 \Leftrightarrow u_k=\mathsf{comm}$, $u_k=2 \Leftrightarrow u_k=\mathsf{joint}$, $\eta_k=0 \Leftrightarrow \eta_k=\mathsf{fail}$, and $\eta_k=1 \Leftrightarrow \eta_k=\mathsf{succ}$. The state at time $k$ is then
\[
S_k=(\alpha^m_k,\alpha^b_k).
\]

We associate with each state-action pair a one-step cost composed of a freshness term at the remote monitor and an action-dependent operational cost. Specifically, the stage cost is defined as
\begin{equation}\label{eq:stage-cost-base}
\begin{aligned}
g(S_k,u_k)&=\alpha^m_k+c_0\mathds{1}\{u_k=0\}\\
&\quad+c_1\mathds{1}\{u_k=1\}+c_2\mathds{1}\{u_k=2\}.
\end{aligned}
\end{equation}

Starting from an initial state $S_0$, the objective is to find a policy $\pi$ that minimizes the expected discounted cumulative cost over an infinite horizon, namely
\begin{equation}
	\min_{\pi \in \mathcal P} \mathbb{E}\left[\sum_{k=0}^{\infty} \gamma^k g(S_k,u_k)\right],
	\label{eq:problem-formulation}
\end{equation}
subject to the state dynamics defined in the previous subsections, where $\gamma$ is a discount factor, and $\mathcal P$ is the set of admissible stationary policies. We denote by $\pi^*$ an optimal policy.

The problem in \eqref{eq:problem-formulation} captures the trade-off of the considered ISAC system. On the one hand, frequent updates improve freshness at the remote monitor by reducing the age of the information on which it relies. On the other hand, each operating mode consumes resources and is affected by a different reliability level. The optimization problem seeks a policy that coordinates sensing and communication decisions while balancing freshness performance against sensing and communication costs.

For the structural analysis developed in the next section, we restrict the discount factor $\gamma$ to the following admissible regime.

\begin{assumption}\label{ass:admissible-gamma}
The discount factor $\gamma$ is chosen such that
\begin{equation}\label{eq:parameters-assumption}
0 < \gamma \leq \frac{\lambda_2}{\lambda_0\lambda_1+\lambda_2(1-\lambda_1)}.
\end{equation}
\end{assumption}

\begin{remark}
This condition is used only to establish the monotonicity of the action-difference function $\Delta_{02}^\ast$ in Lemma \ref{lemma:delta-non-decreasing-in-a_s}. It is not a physical constraint on the sensing or communication links, and it is not required for the MDP formulation, for the existence of an optimal stationary policy, or for the numerical computation of the optimal policy. Numerically, the ordered threshold structure is also observed in several sets of parameters violating this sufficient
condition.

\end{remark}

\section{Single process-monitor Scenario}\label{sec:single-source-scenario}

In this section, we specialize the model in \eqref{eq:problem-formulation} to the single process-monitor case. We study the structure of the optimal stationary policy, with the goal of showing that it has a switching-threshold form in the AoI state space. We then find an upper bound for the error caused by the truncation of the state space, which is inevitable for the offline computation of the optimal policy.

\subsection{State Transitions and Reachable State Space}
We first characterize the state transitions and the reachable area of the AoI state space. With the conventions introduced in Section \ref{sec:prolem-formulation}, the single process-monitor system is a discounted infinite-horizon MDP with state $S_k=(\alpha^m_k,\alpha^b_k)$ and action space $U=\{0,1,2\}$. The state dynamics follow directly from the AoI update rules. In particular,
\begin{equation}\label{eq:state-evolution}
(\alpha^m_{k+1},\alpha^b_{k+1})=
\begin{cases}
	(\alpha^m_k+1,1), & \text{if } (u_k,\eta_k)=(0,1),\\
	(\alpha^m_k+1,\alpha^b_k+1), & \text{if } (u_k,\eta_k)=(0,0),\\
	(\alpha^b_k+1,\alpha^b_k+1), & \text{if } (u_k,\eta_k)=(1,1),\\
	(\alpha^m_k+1,\alpha^b_k+1), & \text{if } (u_k,\eta_k)=(1,0),\\
	(\alpha^b_k+1,1), & \text{if } (u_k,\eta_k)=(2,1),\\
	(\alpha^m_k+1,\alpha^b_k+1), & \text{if } (u_k,\eta_k)=(2,0).
\end{cases}
\end{equation}
We make the process start at the initial state $S_0=(1,1)$\footnote{One could also fix the initial state at (0,0), in which case the state transitions would all lead to (1,1) in the following time step, for whatever action $u$ and outcome $\eta$. The results would be identical, up to a shift in the indices.}. From this, only a triangular subset of $\mathbb N^2$ is reachable: every transition preserves $\alpha^m_k\geq\alpha^b_k$ for any time step $k$. Accordingly, the analysis can be restricted to the reachable state space
\[
\mathcal S:=\{(\alpha^m,\alpha^b)\in\mathbb N^2:\alpha^m\ge \alpha^b\ge 1\}.
\]

\begin{theorem}\label{th:policy-structure}
The problem in \eqref{eq:problem-formulation} admits an optimal stationary deterministic policy $\pi^*$ with a switching-threshold structure. Specifically, there exist two functions $\tau_1, \tau_2 : \mathbb{N}\rightarrow\mathbb{N}_0\cup\{+\infty\}$ such that
\[
u^*(\alpha^m,\alpha^b)=
\begin{cases}
	\mathsf{sense,} & \text{if } \alpha^m\leq\tau_1(\alpha^b),\\
	\mathsf{joint,} & \text{if } \tau_1(\alpha^b)<\alpha^m\leq\tau_2(\alpha^b),\\
	\mathsf{comm,} & \text{if } \alpha^m>\tau_2(\alpha^b).\\
\end{cases}
\]
for all $(\alpha^m,\alpha^b)\in\mathcal{S}$, where $\tau_1(\alpha^b)$ is nondecreasing in $\alpha^b$ and $\tau_1(\alpha^b)\leq\tau_2(\alpha^b)$ for every $\alpha^b$.
\end{theorem}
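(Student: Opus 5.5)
The plan is to work with the Bellman equation on $\mathcal S$ and compare the three action values pairwise. Using \eqref{eq:state-evolution}, the optimal value function $V^\ast$ satisfies $V^\ast(m,b)=m+\min_{u}Q^\ast(m,b,u)$, where $m=\alpha^m$, $b=\alpha^b$ and
\begin{align*}
Q^\ast(m,b,0)&=c_0+\gamma\big[\lambda_0V^\ast(m+1,1)+(1-\lambda_0)V^\ast(m+1,b+1)\big],\\
Q^\ast(m,b,1)&=c_1+\gamma\big[\lambda_1V^\ast(b+1,b+1)+(1-\lambda_1)V^\ast(m+1,b+1)\big],\\
Q^\ast(m,b,2)&=c_2+\gamma\big[\lambda_2V^\ast(b+1,1)+(1-\lambda_2)V^\ast(m+1,b+1)\big].
\end{align*}
The stage cost grows linearly and the action set is finite. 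Existence of an optimal stationary deterministic policy therefore follows from standard results for discounted MDPs with countable state space, via a weighted-norm contraction with weight $m$. Ties are broken in the fixed order $\mathsf{sense}\prec\mathsf{joint}\prec\mathsf{comm}$.

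The first step is a set of properties of $V^\ast$, proved by value iteration $V_{n+1}=TV_n$ from $V_0\equiv 0$ and passing to the limit. These are: (i) $V^\ast(m,b)$ is nondecreasing in $m$ and in $b$; (ii) increment bounds in $m$, namely $1\le V^\ast(m+1,b)-V^\ast(m,b)\le \bar D$ for an explicit constant $\bar D$. To get $\bar D$, I would couple two trajectories that start at $m$ and $m+1$ and use the same policy; their gap persists at most until the first successful communication. Since $\lambda_2\le\lambda_1$, this gives something like $\bar D\le 1/(1-\gamma(1-\lambda_2))$.

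With these properties, the pairwise differences $\Delta_{ij}^\ast(m,b):=Q^\ast(m,b,i)-Q^\ast(m,b,j)$ are handled as follows.
\begin{itemize}
\item For $\Delta_{01}^\ast$, the $m$-dependent part is $\gamma[\lambda_0V^\ast(m+1,1)+(\lambda_1-\lambda_0)V^\ast(m+1,b+1)]$. Because $\lambda_1\ge\lambda_0$ and $V^\ast$ is monotone, this is nondecreasing in $m$.
\item For $\Delta_{21}^\ast$, the $m$-dependent part is $\gamma(\lambda_1-\lambda_2)V^\ast(m+1,b+1)$, which is again nondecreasing in $m$.
\item For $\Delta_{02}^\ast$, the $m$-dependent part is $\gamma[\lambda_0V^\ast(m+1,1)-(\lambda_0-\lambda_2)V^\ast(m+1,b+1)]$. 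This is no longer automatically monotone, and here the lack of submodularity shows up. Monotonicity requires
\[
\lambda_0\big[V^\ast(m+2,1)-V^\ast(m+1,1)\big]\ \ge\ (\lambda_0-\lambda_2)\big[V^\ast(m+2,b+1)-V^\ast(m+1,b+1)\big].
\]
I would establish this, presumably as Lemma~\ref{lemma:delta-non-decreasing-in-a_s}, by inserting the lower increment bound on the left and the upper bound $\bar D$ on the right. It then reduces to $\lambda_0\ge(\lambda_0-\lambda_2)\bar D$, and I expect Assumption~\eqref{eq:parameters-assumption} to be exactly what makes this hold.
\end{itemize}
If the crude bound is insufficient, I would sharpen it by a finer coupling at $b=1$ versus $b+1$. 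This is the main obstacle of the proof.

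Once all three differences are nondecreasing in $m$, each pair of actions has a single crossing in $m$ for fixed $b$. The three crossings are orientated so that $\mathsf{sense}$ beats $\mathsf{joint}$ and $\mathsf{comm}$ for small $m$, and $\mathsf{comm}$ beats both for large $m$. Set $\tau_1(b):=\max\{m:\Delta_{01}^\ast\le 0,\ \Delta_{02}^\ast\le0\}$, and let $\tau_2(b)$ be the last $m$ at which $\mathsf{comm}$ is not strictly better than $\min(Q_0,Q_2)$, with values in $\mathbb N_0\cup\{+\infty\}$. Pairwise single crossing then gives the interval form, with $\tau_1\le\tau_2$. A possible empty $\mathsf{joint}$ interval simply corresponds to $\tau_1=\tau_2$.

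Finally, to show that $\tau_1(b)$ is nondecreasing in $b$, I would prove that $\Delta_{01}^\ast$ and $\Delta_{02}^\ast$ are nonincreasing in $b$ at fixed $m$, i.e.\ sensing becomes relatively more attractive as the base-station information ages. For $\Delta_{02}^\ast$ the $b$-dependent part is $\gamma[(\lambda_2-\lambda_0)V^\ast(m+1,b+1)-\lambda_2V^\ast(b+1,1)]$, which is nonincreasing by monotonicity of $V^\ast$. For $\Delta_{01}^\ast$ it is $\gamma[(\lambda_1-\lambda_0)V^\ast(m+1,b+1)-\lambda_1V^\ast(b+1,b+1)]$. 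For this one I would use a diagonal-versus-off-diagonal increment comparison, namely that the $b$-increments of $V^\ast(b+1,b+1)$ dominate those of $V^\ast(m+1,b+1)$, again proved along value iteration.
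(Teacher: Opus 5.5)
Your overall architecture matches the paper's: monotonicity of $\Delta^*_{01}$ and $\Delta^*_{21}$ in $m$ from monotonicity of $V^*$, increment bounds plus the Assumption for $\Delta^*_{02}$, and monotonicity in $b$ for $\tau_1$. However, the increment bounds you propose for $\Delta^*_{02}$ fail.

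The coupling bound $\bar D\le 1/(1-\gamma(1-\lambda_2))$ is false. The gap between the trajectories from $(m,b)$ and $(m+1,b)$ closes only after a successful $\mathsf{comm}$ or $\mathsf{joint}$, and it survives every $\mathsf{sense}$ slot with probability one. Take $c_0=0$ and $c_1=c_2=K$. Then $Q_1^*-Q_0^*\ge K-\gamma\lambda_1(m-b)/(1-\gamma)$ and similarly for $Q_2^*$. So sensing is strictly optimal whenever $m<K(1-\gamma)/(\gamma\lambda_1)$, and at small $m$ the increment $V^*(m+1,b)-V^*(m,b)$ tends to $1/(1-\gamma)$ as $K\to\infty$. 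The only uniform upper bound is therefore $1/(1-\gamma)$ (Lemma~\ref{lemma:Vstar-local-upper-bound}).

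With that bound, your lower bound $1$ forces $\lambda_0(1-\gamma)\ge\lambda_0-\lambda_2$, i.e.\ $\gamma\le\lambda_2/\lambda_0$. This is strictly stronger than \eqref{eq:parameters-assumption} when $\lambda_2<\lambda_0$, because $\lambda_0\lambda_1+\lambda_2(1-\lambda_1)=\lambda_2+\lambda_1(\lambda_0-\lambda_2)<\lambda_0$ for $\lambda_1<1$.

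The missing ingredient is a sharper \emph{lower} bound (Lemma~\ref{lemma:Vstar-horizontal-lower-bound}). Along value iteration, every action's horizontal $Q$-increment is at least $1+\gamma(1-\lambda_1)L_n$. Sensing propagates a horizontal increment on both branches, $\mathsf{comm}$ and $\mathsf{joint}$ on the failure branch, and $\lambda_1$ is the largest success probability. Hence $V^*(m+1,b)-V^*(m,b)\ge 1/(1-\gamma+\gamma\lambda_1)$, and the requirement $\lambda_0/(1-\gamma+\gamma\lambda_1)\ge(\lambda_0-\lambda_2)/(1-\gamma)$ is exactly the Assumption. Your fallback of comparing rows $1$ and $b+1$ more finely would need a submodularity-type inequality, which the paper notes is not available here.

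The second gap is $C(m,b)\le B(b+1)$ (Lemma~\ref{lemma:ordered-cost-vertical-bound}). You need it for $\Delta^*_{01}$ in $b$, but you only assert it ``along value iteration''. It is the most delicate step, and it uses two properties absent from your list:
\begin{itemize}
\item discrete concavity of $V^*$ in $m$ (Lemma~\ref{lemma:rowwise-discrete-concavity}), which makes $A(b)=V^*(b+2,1)-V^*(b+1,1)$ nonincreasing;
\item optimality of $\mathsf{sense}$ on the diagonal (Lemma~\ref{lemma:ordered-cost-diagonal-zero}), which yields $B(b)=1+\gamma\lambda_0A(b)+\gamma(1-\lambda_0)B(b+1)$.
\end{itemize}
Unrolling this recursion with $A$ nonincreasing gives $B(b)-A(b)<1/(\gamma\lambda_0)$. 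Comparing each action's vertical $Q$-increment with $B(b+1)$ then gives $Y(b)\le\gamma(1-\lambda_2)Y(b+1)$ for $Y(b)=[\sup_{m\ge b+1}C(m,b)-B(b+1)]^+$. Since $Y$ is bounded, $Y\equiv 0$.

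Diagonal optimality of sensing also keeps your $\tau_1(b)$ well defined. It handles the case where sensing is optimal in row $b$ only at $(b,b)$, since $b$-monotonicity of $\Delta^*_{01},\Delta^*_{02}$ holds only for $b\le m-1$.

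A minor point: with weight $w=m$ the one-step modulus at $m=1$ is $2\gamma$. Use $\rho^{m}$ with $\rho<1/\gamma$, or $m+\kappa$ with $\kappa$ large.
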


Theorem 1 shows that the optimal policy has an ordered structure: along each fixed value of the base station AoI, the optimal action can only move in the order $0 \to 2 \to 1$. Thus, the optimal rule is described by two switching boundaries.

The derivation proceeds in several steps. We first present the space in which the Bellman operator can be defined. Then, we derive structural properties of the Bellman operator and of the optimal value function. These properties are subsequently used to control directional increments of the optimal value function and to analyze the action-difference functions associated with the three available actions. The goal is to prove that the action-difference functions satisfy suitable single-crossing properties. Finally, we combine these findings to prove that the optimal policy admits a threshold structure.

\begin{remark}
Unlike several standard AoI problems, the present model does not naturally admit a proof based on submodularity of the optimal value function, which makes the derivation more delicate. The argument developed below relies instead on monotonicity and concavity properties of the Bellman operator and suitable bounds on directional increments.
\end{remark}

\subsection{Proof of Theorem \ref{th:policy-structure}}

We begin by introducing a setting in which the Bellman operator can be analyzed in a rigorous way. Since the MDP is unbounded, it is necessary to define a suitable norm and work on the corresponding Banach space. This allows us to show that the Bellman operator is well defined and contractive, and therefore admits a unique fixed point, which coincides with the optimal value function. Convergence to a fixed point first ensures that the optimal value function exists, and second lets us prove some of its key properties through value iteration.

Let $\rho\in(1,\gamma^{-1})$, and define $w(\alpha^m,\alpha^b):=\rho^{\alpha^m}$, the norm
\[
\|V\|_w:=\sup_{(\alpha^m,\alpha^b)\in\mathcal S}\frac{|V(\alpha^m,\alpha^b)|}{w(\alpha^m,\alpha^b)},
\]
and the Banach space
\[
\mathcal B_w:=\{V:\mathcal S\to\mathbb R:\|V\|_w<\infty\}.
\]
For $V\in\mathcal B_w$, define the Bellman operator
\[
\begin{aligned}
TV(S)&=\min_{u\in\{0,1,2\}} Q_u(S),\\
Q_u(S)&:=g(S,u)+\gamma\mathbb E\left[V(S')\mid S,u\right].
\end{aligned}
\]
Since the next monitor AoI always satisfies $\alpha^{m\prime}\le \alpha^m+1$, we have
\[
\mathbb E\bigl[w(S')\mid S,u\bigr]\le \rho\, w(S),\qquad \forall S\in\mathcal S,\ \forall u\in\{0,1,2\}.
\]
Moreover, $g(S,u)\le \alpha^m+c_2\le C_w w(S)$ for some finite constant $C_w>0$, because $\sup_{n\ge 1}(n+c_2)\rho^{-n}<\infty$. Therefore $T:\mathcal B_w\to\mathcal B_w$ and, for every $V,W\in\mathcal B_w$,
\[
\|TV-TW\|_w\le \gamma\rho\,\|V-W\|_w.
\]
Since $\gamma\rho<1$, $T$ is a contraction on $\mathcal B_w$. We work on the Banach space $\mathcal B_w$, on which the Bellman operator is a contracting operator. This lets us exploit a value iteration algorithm for which $T$ has a unique fixed point, which is the optimal value function $V^*$. We can now write the action-value function
\begin{equation}
    Q^*_u(S)=g_u(S,u)+\gamma\mathbb{E}\bigl[V^*(S')|S,u\bigr]
\end{equation}
and the Bellman equation
\begin{equation}
    V^*(S)=\min_{u\in\{0,1,2\}}Q^*_u(S).
\end{equation}

We now prove some structural properties of the optimal value function by defining a class of functions that is invariant under the Bellman operator. These properties will be used later on to analyze the behavior of the action-value functions $Q_u^*$.

Let $\mathcal F$ be the class of functions $V:\mathcal S\to\mathbb R$ with these properties:
\begin{itemize}
    \item $V$ is coordinatewise nondecreasing;
    \item $V(\alpha^m, \alpha^b)$ is discretely concave in $\alpha^m$ for every fixed $\alpha^b$.
\end{itemize}

\begin{lemma}\label{lemma:TV-nondecreasing}
Let $V\in \mathcal F$ be coordinatewise nondecreasing. Then $TV$ is also coordinatewise nondecreasing.
\end{lemma}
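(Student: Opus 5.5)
The plan is to show that each action-value function $Q_u(S)=g(S,u)+\gamma\mathbb E[V(S')\mid S,u]$, $u\in\{0,1,2\}$, is coordinatewise nondecreasing on $\mathcal S$. Then $TV=\min_u Q_u$, being a pointwise minimum of nondecreasing functions, is itself nondecreasing. Only the monotonicity part of the definition of $\mathcal F$ is used; concavity plays no role here.

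First, the stage cost. By \eqref{eq:stage-cost-base}, $g(S,u)=\alpha^m+c_u$ is nondecreasing in $\alpha^m$ and constant in $\alpha^b$, for every fixed $u$. Next, the expected continuation value. From \eqref{eq:state-evolution} it has the explicit forms
\[
\begin{aligned}
\mathbb E[V(S')\mid S,0]&=\lambda_0V(\alpha^m+1,1)+(1-\lambda_0)V(\alpha^m+1,\alpha^b+1),\\
\mathbb E[V(S')\mid S,1]&=\lambda_1V(\alpha^b+1,\alpha^b+1)+(1-\lambda_1)V(\alpha^m+1,\alpha^b+1),\\
\mathbb E[V(S')\mid S,2]&=\lambda_2V(\alpha^b+1,1)+(1-\lambda_2)V(\alpha^m+1,\alpha^b+1).
\end{aligned}
\]
In each expression, every next state has coordinates that are nondecreasing functions of $(\alpha^m,\alpha^b)$. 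The monitor coordinate is either $\alpha^m+1$ or $\alpha^b+1$, and the base-station coordinate is $1$ or $\alpha^b+1$. Since $V$ is coordinatewise nondecreasing and the transition probabilities $\lambda_u$ do not depend on the state, each term is nondecreasing in each coordinate. The same then holds for the convex combination.

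The only delicate point is that every state being compared must stay inside $\mathcal S$, where $V$ is defined. Take the direction $\alpha^m\mapsto\alpha^m+1$ with $\alpha^b$ fixed: it maps $\mathcal S$ into itself, and so do all resulting next states. Take the direction $\alpha^b\mapsto\alpha^b+1$ with $\alpha^m$ fixed: it is only meaningful when $\alpha^m\ge\alpha^b+1$. In that case the next states $(\alpha^m+1,\alpha^b+2)$, $(\alpha^b+2,\alpha^b+2)$ and $(\alpha^b+2,1)$ all satisfy $\alpha^{m\prime}\ge\alpha^{b\prime}\ge1$, so they lie in $\mathcal S$. Each comparison of next states then differs in a single coordinate, or in both coordinates in the same direction. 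This is handled by chaining two one-coordinate comparisons through an intermediate point of $\mathcal S$. For example, $V(\alpha^b+1,\alpha^b+1)\le V(\alpha^b+2,\alpha^b+1)\le V(\alpha^b+2,\alpha^b+2)$, and the intermediate point $(\alpha^b+2,\alpha^b+1)$ lies in $\mathcal S$. I expect this bookkeeping on the triangular domain to be the only real obstacle; the rest is immediate.

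Finally, take the minimum over $u$. If $S\le S''$ coordinatewise in $\mathcal S$, then $Q_u(S)\le Q_u(S'')$ for every $u$. Hence $\min_uQ_u(S)\le\min_uQ_u(S'')$, that is, $TV(S)\le TV(S'')$.
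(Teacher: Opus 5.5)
Your proof is correct and follows the paper's argument. Like the paper, you write out the three $Q_u$ explicitly, observe that each is coordinatewise nondecreasing because $V$ and the stage cost $\alpha^m+c_u$ are, and then take the pointwise minimum. Your extra bookkeeping on the triangular domain $\mathcal S$ makes explicit a detail the paper leaves implicit, for instance chaining through $(\alpha^b+2,\alpha^b+1)$ for the diagonal successor term of $Q_1$.
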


\begin{proof}
For $u\in\{0,1,2\}$, write
\[
Q_u(s)=g(s,u)+\gamma\,\mathbb E[V(S')\mid s,u].
\]
Using the state transition in \eqref{eq:state-evolution}, we obtain
\begin{align*}
Q_0(\alpha^m,\alpha^b)
&=\alpha^m+c_0+\gamma\bigl[\lambda_0 V(\alpha^m+1,1)\\
&\qquad +(1-\lambda_0)V(\alpha^m+1,\alpha^b+1)\bigr],\\
Q_1(\alpha^m,\alpha^b)
&=\alpha^m+c_1+\gamma\bigl[\lambda_1 V(\alpha^b+1,\alpha^b+1)\\
&\qquad +(1-\lambda_1)V(\alpha^m+1,\alpha^b+1)\bigr],\\
Q_2(\alpha^m,\alpha^b)
&=\alpha^m+c_2+\gamma\bigl[\lambda_2 V(\alpha^b+1,1)\\
&\qquad +(1-\lambda_2)V(\alpha^m+1,\alpha^b+1)\bigr].
\end{align*}

Because $V$ is coordinatewise nondecreasing, the sum in the square bracket is coordinatewise nondecreasing for every $u$. Moreover, the stage cost $g\bigl((\alpha^m,\alpha^b),u\bigr)=\alpha^m+c_u$ is nondecreasing in the state. The pointwise minimum of coordinatewise nondecreasing functions is coordinatewise nondecreasing. Therefore, $TV(s)=\min_{u\in\{0,1,2\}}Q_u(s)$ is coordinatewise nondecreasing.
\end{proof}

\begin{lemma}\label{lemma:rowwise-discrete-concavity}
Let $V\in \mathcal F$ be discretely concave in $\alpha^m$. Then $TV$ is also discretely concave in $\alpha^m$.
\end{lemma}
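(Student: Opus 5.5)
The approach is to fix $\alpha^b\ge 1$ and work on the one-dimensional half-line $\{\alpha^m\in\mathbb N:\alpha^m\ge\alpha^b\}$, which is the row of $\mathcal S$ through $\alpha^b$. I would first show that each action-value function $Q_u(\cdot,\alpha^b)$, $u\in\{0,1,2\}$, is discretely concave in $\alpha^m$ on this half-line. I would then show that the pointwise minimum of finitely many discretely concave sequences on an interval of integers is again discretely concave. Since $TV=\min_u Q_u$, these two facts together give the lemma.

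\textbf{Step 1: each $Q_u$ is concave in $\alpha^m$.} I would use the explicit expressions for $Q_0,Q_1,Q_2$ written in the proof of Lemma \ref{lemma:TV-nondecreasing}.
\begin{itemize}
\item The stage cost $\alpha^m+c_u$ is affine in $\alpha^m$.
\item The failure term $(1-\lambda_u)V(\alpha^m+1,\alpha^b+1)$ is a shift of the row $\alpha^b+1$ of $V$, multiplied by a nonnegative constant. It is therefore discretely concave in $\alpha^m$ by the hypothesis on $V$. The shifted argument stays in $\mathcal S$, since $\alpha^m+1\ge\alpha^b+1$.
\item For $u=1,2$, the success terms $V(\alpha^b+1,\alpha^b+1)$ and $V(\alpha^b+1,1)$ do not depend on $\alpha^m$, so they are constants on the row.
\item For $u=0$, the success term $\lambda_0V(\alpha^m+1,1)$ is a shift of the row $\alpha^b=1$ of $V$. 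That row is defined for all $\alpha^m\ge1$ and is concave by hypothesis.
\end{itemize}
Nonnegative combinations of discretely concave sequences, plus affine terms, are discretely concave: second differences add, and the affine part contributes zero. Hence each $Q_u(\cdot,\alpha^b)$ has nonpositive second differences on the row.

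\textbf{Step 2: the minimum preserves discrete concavity.} This is the only point needing care, because "min of concave is concave" is a continuous statement. I would handle it by extension to the real line.
\begin{itemize}
\item A sequence $f$ on an integer interval is discretely concave, i.e.\ $f(n+1)-f(n)$ is nonincreasing, if and only if its piecewise-linear interpolation $\bar f$ is concave on the corresponding real interval.
\item A concave function on a real interval, restricted to the integers, is a discretely concave sequence, because its increments over consecutive unit intervals are nonincreasing.
\end{itemize}
Set $F:=\min_u \bar Q_u$. This is concave on $[\alpha^b,\infty)$ as a pointwise minimum of concave functions. It agrees with $TV(\cdot,\alpha^b)$ at every integer point, so $TV(\cdot,\alpha^b)$ is discretely concave. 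Alternatively, one can argue directly: at an integer $n$ with minimizer $u^\star$, the concavity of $Q_{u^\star}$ gives
\[
TV(n-1)+TV(n+1)\le Q_{u^\star}(n-1)+Q_{u^\star}(n+1)\le 2Q_{u^\star}(n)=2\,TV(n).
\]
This direct inequality is in fact the cleanest route, and I would present it as the main argument.

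\textbf{Main obstacle.} I expect the main obstacle to be bookkeeping rather than depth. First, the boundary point $\alpha^m=\alpha^b$ has no left neighbour on its row, so the second-difference condition is only imposed at interior points $n\ge\alpha^b+1$. Second, the $u=0$ success term moves along a different row ($\alpha^b=1$) than the state's own row. The hypothesis "concave in $\alpha^m$ for every fixed $\alpha^b$" covers that row, so no additional assumption beyond membership in $\mathcal F$ is needed. Notably, monotonicity of $V$ is not used here at all.
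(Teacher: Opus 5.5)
Your proposal is correct and follows essentially the same route as the paper. You show that each $Q_u(\cdot,\alpha^b)$ is discretely concave, which is the same computation as the paper's check that the forward differences of $q_u$ are nonincreasing, and your direct inequality $TV(n-1)+TV(n+1)\le 2\,TV(n)$, obtained by taking a minimizer $u^\star$ at $n$, is exactly the paper's footnote argument for the pointwise minimum (the piecewise-linear interpolation route is a valid but unnecessary alternative).
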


\begin{proof}
Fix $\alpha^b \ge 1$ and, for $\alpha^m \ge \alpha^b$, define
\[
q_u(\alpha^m) := Q_u(\alpha^m,\alpha^b), \qquad u \in \{0,1,2\}.
\]
Using the state transitions in \eqref{eq:state-evolution}, we write their forward differences as
\begin{align*}
q_0(\alpha^m+1)&-q_0(\alpha^m)\\
&=1+\gamma\lambda_0\bigl[V(\alpha^m+2,1)-V(\alpha^m+1,1)\bigr]\\
&\quad+\gamma(1-\lambda_0)\Bigl[V(\alpha^m+2,\alpha^b+1)\\
&\qquad -V(\alpha^m+1,\alpha^b+1)\Bigr],\\
q_1(\alpha^m+1)&-q_1(\alpha^m)\\
&=1+\gamma(1-\lambda_1)\Bigl[V(\alpha^m+2,\alpha^b+1)\\
&\qquad -V(\alpha^m+1,\alpha^b+1)\Bigr],\\
q_2(\alpha^m+1)&-q_2(\alpha^m)\\
&=1+\gamma(1-\lambda_2)\Bigl[V(\alpha^m+2,\alpha^b+1)\\
&\qquad -V(\alpha^m+1,\alpha^b+1)\Bigr].
\end{align*}
Since $V$ is discretely concave in $\alpha^m$, for every fixed $\alpha^b$ the differences in the square brackets are all nonincreasing in $\alpha^m$. Therefore, for each $u \in \{0,1,2\}$, the sequence $q_u(\alpha^m+1)-q_u(\alpha^m)$ is nonincreasing in $\alpha^m$, which means that $q_u$ is discretely concave. The pointwise minimum of discretely concave functions is discretely concave\footnote{Here discrete concavity means $f(a+2)-2f(a+1)+f(a)\le 0$. Unlike the continuous case, the pointwise minimum preserves this property on $\mathbb N$: if $h(a)=\min_i f_i(a)$ and $i^\star$ reaches the minimum at $a+1$, then $2h(a+1)=2f_{i^\star}(a+1)\ge f_{i^\star}(a)+f_{i^\star}(a+2)\ge h(a)+h(a+2)$.}. Therefore, $TV(\alpha^m,\alpha^b) = \min_{u\in\{0,1,2\}} q_u(\alpha^m)$ is discretely concave in $\alpha^m$ for every fixed $\alpha^b$.
\end{proof}

\begin{lemma}\label{lemma:Vstar-in-F}
The optimal value function $V^*$ belongs to $\mathcal F$.
\end{lemma}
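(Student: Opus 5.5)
We will obtain $V^*\in\mathcal F$ by value iteration combined with a closedness argument in $\mathcal B_w$. Start from $V_0\equiv 0$, which is trivially coordinatewise nondecreasing and discretely concave in $\alpha^m$ (all its second differences vanish), so $V_0\in\mathcal F$. Set $V_{n+1}:=TV_n$. By Lemma~\ref{lemma:TV-nondecreasing} and Lemma~\ref{lemma:rowwise-discrete-concavity}, applied together to $V_n\in\mathcal F$, both defining properties pass to $TV_n$. Hence $V_n\in\mathcal F$ for all $n$ by induction. We also check that $V_0\in\mathcal B_w$ and that $T$ maps $\mathcal B_w$ into itself, so every $V_n$ lies in $\mathcal B_w$.

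Next, since $T$ is a $\gamma\rho$-contraction on $\mathcal B_w$ with unique fixed point $V^*$, Banach's fixed point theorem gives $\|V_n-V^*\|_w\le(\gamma\rho)^n\|V_0-V^*\|_w\to0$. Convergence in the weighted norm implies pointwise convergence: for each fixed $S\in\mathcal S$,
\[
|V_n(S)-V^*(S)|\le w(S)\,\|V_n-V^*\|_w\to 0 .
\]
This holds because $w(S)=\rho^{\alpha^m}$ is a finite constant at each state.

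Finally, we show that $\mathcal F$ is closed under pointwise limits. Both defining properties are non-strict inequalities involving finitely many evaluations of $V$. Monotonicity reads $V(\alpha^m+1,\alpha^b)-V(\alpha^m,\alpha^b)\ge0$ and $V(\alpha^m,\alpha^b+1)-V(\alpha^m,\alpha^b)\ge0$, wherever both points lie in $\mathcal S$. Concavity reads $V(a+2,\alpha^b)-2V(a+1,\alpha^b)+V(a,\alpha^b)\le0$ for $a\ge\alpha^b$. Each inequality holds for every $V_n$, so passing to the limit $n\to\infty$ preserves it for $V^*$. Therefore $V^*\in\mathcal F$.

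The only delicate point is bookkeeping at the boundary of the triangular domain $\mathcal S$. Monotonicity in $\alpha^b$ is only required where $(\alpha^m,\alpha^b+1)\in\mathcal S$, i.e.\ $\alpha^m\ge\alpha^b+1$. The earlier lemmas must be read with the same convention, and one must verify that the transitions in Lemma~\ref{lemma:TV-nondecreasing} stay in $\mathcal S$. For instance, increasing $\alpha^b$ to $\alpha^m$ makes $V(\alpha^b+1,\alpha^b+1)$ compared with $V(\alpha^m+1,\alpha^b+1)$ remain valid. We take the earlier lemmas as given, so this needs only a remark rather than new work.
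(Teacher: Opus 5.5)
Your proof is correct and follows the paper's argument: start value iteration at $V^{(0)}\equiv 0\in\mathcal F$, propagate membership in $\mathcal F$ by Lemmas \ref{lemma:TV-nondecreasing} and \ref{lemma:rowwise-discrete-concavity}, and pass to the pointwise limit. You also spell out two steps the paper leaves implicit: $\|\cdot\|_w$-convergence gives pointwise convergence because $w(S)=\rho^{\alpha^m}$ is finite at each state, and $\mathcal F$ is closed under pointwise limits because its defining conditions are non-strict inequalities on finitely many values. Your closing boundary remark is garbled, but it only concerns Lemma \ref{lemma:TV-nondecreasing}. The one triangle issue there is the diagonal step $(\alpha^b+1,\alpha^b+1)\to(\alpha^b+2,\alpha^b+2)$ in $Q_1$, which is handled by passing through $(\alpha^b+2,\alpha^b+1)\in\mathcal S$.
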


\begin{proof}
Start value iteration from $V^{(0)}\equiv 0\in\mathcal F$. Lemma \ref{lemma:TV-nondecreasing} and Lemma \ref{lemma:rowwise-discrete-concavity} imply $V^{(n+1)}=TV^{(n)}\in\mathcal F$ whenever $V^{(n)}\in\mathcal F$. Hence $V^{(n)}\in\mathcal F$ for all $n\ge 0$. Pointwise convergence of $V^{(n)}$ to $V^*$ ensures $V^*\in\mathcal F$.
\end{proof}

\subsubsection{Marginal Value Increments}
The proof of the threshold structure relies on comparing the three action-value functions $Q_0^*$, $Q_1^*$, and $Q_2^*$. These comparisons involve differences of the optimal value function evaluated at adjacent states. For this reason, we begin by deriving bounds on several one-step increments of $V^*$. These bounds will later be used to show that the action-difference functions satisfy single-crossing properties.

We define the following horizontal, diagonal, and vertical one-step increments:
\begin{align}
A(\alpha^b) 
&:= V^*(\alpha^b+2,1)-V^*(\alpha^b+1,1),\\
B(\alpha^b) 
&:= V^*(\alpha^b+1,\alpha^b+1)-V^*(\alpha^b,\alpha^b),\\
C(\alpha^m,\alpha^b) 
&:= V^*(\alpha^m+1,\alpha^b+2)-V^*(\alpha^m+1,\alpha^b+1),\notag\\
&\hspace{2em}\alpha^m\geq \alpha^b+1.
\end{align}

Lemmas \ref{lemma:Vstar-horizontal-lower-bound}--\ref{lemma:ordered-cost-vertical-bound} are structured as follows. Lemma \ref{lemma:Vstar-horizontal-lower-bound} gives a lower bound on horizontal increments of $V^*$. Lemma \ref{lemma:Vstar-local-upper-bound} gives a uniform upper bound on local one-step increments. Lemma \ref{lemma:ordered-cost-diagonal-zero} identifies the optimal action on the diagonal and yields a recursion for the diagonal increment $B$. Finally, Lemmas \ref{lemma:ordered-cost-diagonal-gap} and \ref{lemma:ordered-cost-vertical-bound} compare the increments $A$, $B$, and $C$, which will be needed to establish the monotonicity of the action-difference functions. The proofs of these lemmas are in the Appendices \ref{app:Vstar-horizontal-lower-bound}--\ref{app:ordered-cost-vertical-bound}.

\begin{lemma}\label{lemma:Vstar-horizontal-lower-bound}
For every $\alpha^m\geq \alpha^b\ge 1$,
\begin{equation}\label{eq:Vstar-horizontal-lower-bound}
V^*(\alpha^m+1,\alpha^b)-V^*(\alpha^m,\alpha^b)\geq \frac{1}{1-\gamma+\gamma\lambda_1}.
\end{equation}
\end{lemma}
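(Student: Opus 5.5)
The plan is to propagate a lower bound on horizontal increments through value iteration, as in the proof of Lemma~\ref{lemma:Vstar-in-F}, and then pass to the limit. For a function $V$ on $\mathcal S$, define its minimal horizontal increment
\[
L(V):=\inf_{\alpha^m\ge\alpha^b\ge 1}\bigl[V(\alpha^m+1,\alpha^b)-V(\alpha^m,\alpha^b)\bigr].
\]
The key step is the one-step inequality
\[
L(TV)\ \ge\ 1+\gamma(1-\lambda_1)\,L(V)\quad\text{whenever } L(V)\ge 0 .
\]
Its fixed point $L^\star=1+\gamma(1-\lambda_1)L^\star$ is exactly $L^\star=1/(1-\gamma+\gamma\lambda_1)$, which is the constant in \eqref{eq:Vstar-horizontal-lower-bound}.

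To prove the one-step inequality, fix $\alpha^b$ and use the forward differences $q_u(\alpha^m+1)-q_u(\alpha^m)$ already computed in the proof of Lemma~\ref{lemma:rowwise-discrete-concavity}. Every bracketed difference there is a horizontal increment of $V$ at a state of $\mathcal S$. This holds for $(\alpha^m+1,1)$ and also for $(\alpha^m+1,\alpha^b+1)$, since $\alpha^m+1\ge\alpha^b+1$. Hence each bracket is at least $L(V)\ge 0$, which gives:
\begin{itemize}
\item for $u=0$, the difference is at least $1+\gamma L(V)$;
\item for $u=1$, it is at least $1+\gamma(1-\lambda_1)L(V)$;
\item for $u=2$, it is at least $1+\gamma(1-\lambda_2)L(V)$.
\end{itemize}
Since $\lambda_2\le\lambda_1<1$ and $L(V)\ge0$, all three are at least $1+\gamma(1-\lambda_1)L(V)$.

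To pass from the $q_u$ to the minimum, let $u^\dagger$ attain $\min_u q_u(\alpha^m+1)$. Then
\[
TV(\alpha^m+1,\alpha^b)-TV(\alpha^m,\alpha^b)\ \ge\ q_{u^\dagger}(\alpha^m+1)-q_{u^\dagger}(\alpha^m),
\]
and the right-hand side is bounded below by the previous step. This proves the one-step inequality.

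Next, run value iteration from $V^{(0)}\equiv 0$, so $L(V^{(0)})=0$. By induction, $L(V^{(n)})\ge \ell_n$, where $\ell_0=0$ and $\ell_{n+1}=1+\gamma(1-\lambda_1)\ell_n$. The $\ell_n$ are nonnegative, which keeps the hypothesis $L(V)\ge0$ valid at every step. Because $\gamma(1-\lambda_1)<1$, $\ell_n\uparrow L^\star$. Finally, $V^{(n)}\to V^*$ pointwise, since convergence in $\|\cdot\|_w$ implies pointwise convergence. Each fixed increment $V^{(n)}(\alpha^m+1,\alpha^b)-V^{(n)}(\alpha^m,\alpha^b)\ge\ell_n$ therefore passes to the limit and yields the claim.

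The main obstacle is the action $u=1$. Its successful branch jumps to $(\alpha^b+1,\alpha^b+1)$, which does not depend on $\alpha^m$, so only the failure branch contributes to the increment. This branch is what fixes the constant at $1-\lambda_1$ rather than $1$. One must check that the other actions never give a weaker bound, and this is where the assumption $\lambda_2\le\lambda_1$ enters. One must also keep the bound nonnegative at every iteration, so that dropping the $\lambda_0$ term for $u=0$ is legitimate.
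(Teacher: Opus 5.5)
Your proposal is correct and follows essentially the same route as the paper's proof in Appendix~\ref{app:Vstar-horizontal-lower-bound}. Both run value iteration from $V^{(0)}\equiv 0$, derive the recursion $L_{n+1}\ge 1+\gamma(1-\lambda_1)L_n$ from the three forward differences together with $\min_i x_i-\min_i y_i\ge\min_i(x_i-y_i)$ (your argmin step is exactly this inequality), and pass to the pointwise limit. The only cosmetic difference is that you get nonnegativity of the increments from the induction itself, whereas the paper takes it from coordinatewise monotonicity as in Lemma~\ref{lemma:TV-nondecreasing}.
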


\emph{Proof.} See Appendix \ref{app:Vstar-horizontal-lower-bound}.

\begin{lemma}\label{lemma:Vstar-local-upper-bound}
For every $\alpha^m\geq \alpha^b\geq 1$ and every $(\tilde\alpha^m,\tilde\alpha^b)\in\mathcal S$ such that $0\leq \tilde\alpha^m-\alpha^m\leq 1$ and $0\leq \tilde\alpha^b-\alpha^b\leq 1$,
\begin{equation}\label{eq:Vstar-local-upper-bound}
0\leq V^*(\tilde\alpha^m,\tilde\alpha^b)-V^*(\alpha^m,\alpha^b)\leq \frac{1}{1-\gamma}.
\end{equation}
\end{lemma}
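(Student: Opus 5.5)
The lower bound $0\le V^*(\tilde\alpha^m,\tilde\alpha^b)-V^*(\alpha^m,\alpha^b)$ follows directly from Lemma \ref{lemma:Vstar-in-F}: $V^*$ is coordinatewise nondecreasing, and $(\tilde\alpha^m,\tilde\alpha^b)$ dominates $(\alpha^m,\alpha^b)$ componentwise.

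For the upper bound we use a coupling (policy-mimicking) argument rather than value iteration. Let $\pi^*$ be optimal. Start the system from $\tilde S_0=(\tilde\alpha^m,\tilde\alpha^b)$ and apply the policy that takes, at every step, the action $\pi^*$ would take on a coupled copy started from $S_0=(\alpha^m,\alpha^b)$. Both copies share the same success/failure outcomes $\eta_k$. This policy is history-dependent, but its discounted cost upper-bounds $V^*(\tilde S_0)$, since the optimal value over stationary policies equals the optimum over all policies for this discounted MDP with bounded-in-$w$ costs. The action costs $c_u$ coincide in both copies, so the gap reduces to
\[
V^*(\tilde S_0)-V^*(S_0)\le \mathbb E\Bigl[\sum_{k\ge0}\gamma^k\bigl(\tilde\alpha^m_k-\alpha^m_k\bigr)\Bigr].
\]

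The key step is to show by induction on $k$, using \eqref{eq:state-evolution}, that under the coupling we have $0\le\tilde\alpha^m_k-\alpha^m_k\le1$ and $0\le\tilde\alpha^b_k-\alpha^b_k\le1$ for all $k$. Every transition maps each coordinate either to the old value plus one, to $1$, or to $\alpha^b+1$, and the map is applied identically in both copies. Hence each new coordinate differs by either the same coordinate difference, zero, or the $\alpha^b$ difference, all of which lie in $\{0,1\}$ by the induction hypothesis. Summing, the gap is at most $\sum_k\gamma^k=\frac{1}{1-\gamma}$.

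The main thing to check is that the mimicking policy is admissible, and in particular that the coupled trajectory stays in $\mathcal S$. This holds automatically because all transitions preserve $\alpha^m\ge\alpha^b$. If one insists on stationary policies only, the same bound follows by running value iteration from $V^{(0)}\equiv0$ and proving inductively that $V^{(n)}(\tilde S)-V^{(n)}(S)\le\sum_{j<n}\gamma^j$. In that version, for each $u$ one compares $Q_u(\tilde S)-Q_u(S)$: each successor pair again differs by at most one unit per coordinate, so the difference is at most $1+\gamma\sum_{j<n}\gamma^j$. Taking the minimum over $u$ preserves this bound, because $TV(\tilde S)\le Q_{u^*(S)}(\tilde S)$. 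Passing to the limit $n\to\infty$ then gives the bound for $V^*$.
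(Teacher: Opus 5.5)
Your proof is correct, but your primary argument is not the one the paper uses.

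The paper proves the upper bound exactly as in your fallback. It runs value iteration from $V^{(0)}\equiv 0$ and lets $U_n$ be the supremum of $V^{(n)}(\tilde\alpha^m,\tilde\alpha^b)-V^{(n)}(\alpha^m,\alpha^b)$ over all admissible adjacent pairs. It observes that a common action and outcome map such a pair to another such pair, derives $U_{n+1}\le 1+\gamma U_n$, and passes to the limit pointwise. The lower bound comes from Lemma~\ref{lemma:Vstar-in-F}, as in yours.

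Your coupling argument buys transparency. The pathwise invariant that both coordinate gaps stay in $\{0,1\}$ shows directly that each slot costs at most one extra unit of monitor age. What it costs is reliance on a fact the paper never establishes: that the fixed point $V^*$ of $T$ lower-bounds the cost of a non-stationary control rule.

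Your description of the mimicking rule as \emph{history-dependent} is also slightly off. The rule uses past outcomes, and these are not always recoverable from the $\tilde S$ trajectory. For example, with $S_0=(3,2)$ and $\tilde S_0=(3,3)$, action $\mathsf{comm}$ sends $\tilde S_0$ to $(4,4)$ under both outcomes, while the copy moves to $(3,3)$ or $(4,3)$.

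This is easily repaired by the verification argument. Let $\mathcal F_k$ be the $\sigma$-field generated by $\eta_0,\dots,\eta_{k-1}$. If $u_k$ is $\mathcal F_k$-measurable, then $V^*(\tilde S_k)\le g(\tilde S_k,u_k)+\gamma\,\mathbb E[V^*(\tilde S_{k+1})\mid\mathcal F_k]$. Telescoping is legitimate because $\gamma^n\,\mathbb E\bigl[|V^*(\tilde S_n)|\bigr]\le\|V^*\|_w\rho^{\tilde\alpha^m}(\gamma\rho)^n\to 0$.

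The paper's operator induction avoids these policy-class issues altogether by staying inside the contraction framework on $\mathcal B_w$. It also mirrors the $L_n$ recursion used for Lemma~\ref{lemma:Vstar-horizontal-lower-bound}.
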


\emph{Proof.} See Appendix \ref{app:Vstar-local-upper-bound}.

\begin{lemma}\label{lemma:ordered-cost-diagonal-zero}
For every $\alpha\geq 1$, action $0$ is optimal on the diagonal state $(\alpha,\alpha)$.
\end{lemma}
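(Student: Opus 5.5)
On the diagonal, communication cannot improve the monitor's AoI, so I plan to show directly that $Q^*_0(\alpha,\alpha)\le Q^*_1(\alpha,\alpha)$ and $Q^*_0(\alpha,\alpha)\le Q^*_2(\alpha,\alpha)$. The tools are the coordinatewise monotonicity of $V^*$ from Lemma \ref{lemma:Vstar-in-F} and the standing orderings $\lambda_2\le\lambda_0$ and $c_0\le c_1\le c_2$. No concavity or increment bounds should be needed.

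First, I specialize the expressions for $Q_u$ from the proof of Lemma \ref{lemma:TV-nondecreasing} to $V=V^*$ and $\alpha^m=\alpha^b=\alpha$. A successful communication from $(\alpha,\alpha)$ leads to $(\alpha+1,\alpha+1)$, which is the same state as a failure. Hence
\begin{align*}
Q^*_0(\alpha,\alpha)&=\alpha+c_0+\gamma\bigl[V^*(\alpha+1,\alpha+1)-\lambda_0 D(\alpha)\bigr],\\
Q^*_1(\alpha,\alpha)&=\alpha+c_1+\gamma V^*(\alpha+1,\alpha+1),\\
Q^*_2(\alpha,\alpha)&=\alpha+c_2+\gamma\bigl[V^*(\alpha+1,\alpha+1)-\lambda_2 D(\alpha)\bigr],
\end{align*}
where $D(\alpha):=V^*(\alpha+1,\alpha+1)-V^*(\alpha+1,1)$.

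Second, I note that both $(\alpha+1,1)$ and $(\alpha+1,\alpha+1)$ lie in $\mathcal S$. Since $V^*$ is nondecreasing in $\alpha^b$ (Lemma \ref{lemma:Vstar-in-F}), it follows that $D(\alpha)\ge 0$. Then:
\begin{itemize}
\item $Q^*_1-Q^*_0=(c_1-c_0)+\gamma\lambda_0 D(\alpha)\ge 0$.
\item $Q^*_2-Q^*_0=(c_2-c_0)+\gamma(\lambda_0-\lambda_2)D(\alpha)\ge 0$, using $\lambda_0\ge\lambda_2$.
\end{itemize}
Therefore action $0$ attains the minimum in the Bellman equation at $(\alpha,\alpha)$.

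There is no real obstacle here. The only point to check carefully is the transition bookkeeping on the diagonal: that the comm-success state $(\alpha^b+1,\alpha^b+1)$ coincides with the failure state. This is immediate from \eqref{eq:state-evolution}.

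As a byproduct, I will record the recursion the paper mentions for $B$. Applying the Bellman equation with $u=0$ at $(\alpha,\alpha)$ and $(\alpha+1,\alpha+1)$ and subtracting gives
\[
B(\alpha+1)=1+\gamma\lambda_0 A(\alpha)+\gamma(1-\lambda_0)B(\alpha+2).
\]
This is the form used in later lemmas.
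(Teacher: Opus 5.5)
Your argument is correct and is essentially the paper's proof: both write $Q^*_0,Q^*_1,Q^*_2$ at $(\alpha,\alpha)$, use that communication leads to $(\alpha+1,\alpha+1)$ whatever the outcome, and conclude from $c_0\le c_1\le c_2$, $\lambda_2\le\lambda_0$, and the monotonicity of $V^*$ in $\alpha^b$ (Lemma~\ref{lemma:Vstar-in-F}) that $Q^*_0$ is minimal. Your byproduct recursion, which is not part of the lemma, has an index slip: with the paper's definition $B(\alpha^b)=V^*(\alpha^b+1,\alpha^b+1)-V^*(\alpha^b,\alpha^b)$, subtracting the Bellman equations at $(\alpha+1,\alpha+1)$ and $(\alpha,\alpha)$ gives $B(\alpha)=1+\gamma\lambda_0A(\alpha)+\gamma(1-\lambda_0)B(\alpha+1)$, so your $B$ indices are shifted up by one relative to $A$.
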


\emph{Proof.} See Appendix \ref{app:ordered-cost-diagonal-zero}.

Writing the Bellman equation at $(\alpha^b+1,\alpha^b+1)$ and $(\alpha^b,\alpha^b)$, and subtracting the latter from the former, we obtain the following recursive definition for $B(\alpha^b)$ which will be used in Lemmas \ref{lemma:ordered-cost-diagonal-gap} and \ref{lemma:ordered-cost-vertical-bound}: 
\begin{equation}\label{eq:ordered-cost-diagonal-recursion}
B(\alpha^b)=1+\gamma\lambda_0A(\alpha^b)+\gamma(1-\lambda_0)B(\alpha^b+1).
\end{equation}

\begin{lemma}\label{lemma:ordered-cost-diagonal-gap}
For every $\alpha^b\geq 1$,
\begin{equation}\label{eq:ordered-cost-gap-bound}
B(\alpha^b)-A(\alpha^b)\leq
\frac{1-(1-\gamma)A(\alpha^b)}{1-\gamma+\gamma\lambda_0} < \frac{1}{\gamma\lambda_0}.
\end{equation}
\end{lemma}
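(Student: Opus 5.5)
The plan is to reduce the first inequality to a monotonicity property of the diagonal increments $B$, and then get that monotonicity by unrolling the recursion \eqref{eq:ordered-cost-diagonal-recursion}. The second inequality is elementary algebra.

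\emph{Step 1 (reduction).} Multiply the claimed bound by $1-\gamma+\gamma\lambda_0>0$. It becomes
\[
(1-\gamma+\gamma\lambda_0)B(\alpha^b)\le 1+\gamma\lambda_0 A(\alpha^b).
\]
Substitute $1+\gamma\lambda_0A(\alpha^b)=B(\alpha^b)-\gamma(1-\lambda_0)B(\alpha^b+1)$ from \eqref{eq:ordered-cost-diagonal-recursion}. The inequality then reads $\gamma(1-\lambda_0)B(\alpha^b+1)\le \gamma(1-\lambda_0)B(\alpha^b)$. Since $\gamma(1-\lambda_0)>0$, the first inequality is equivalent to $B(\alpha^b+1)\le B(\alpha^b)$ for every $\alpha^b\ge 1$.

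\emph{Step 2 (monotonicity of $B$).} This is the main point. Iterate \eqref{eq:ordered-cost-diagonal-recursion} $n$ times:
\[
B(\alpha^b)=\sum_{k=0}^{n-1}\bigl(\gamma(1-\lambda_0)\bigr)^k\bigl(1+\gamma\lambda_0A(\alpha^b+k)\bigr)+\bigl(\gamma(1-\lambda_0)\bigr)^nB(\alpha^b+n).
\]
Lemma \ref{lemma:Vstar-local-upper-bound} gives $0\le B(\cdot)\le 1/(1-\gamma)$. Hence the remainder term vanishes as $n\to\infty$, and
\[
B(\alpha^b)=\sum_{k\ge 0}\bigl(\gamma(1-\lambda_0)\bigr)^k\bigl(1+\gamma\lambda_0A(\alpha^b+k)\bigr).
\]
By Lemma \ref{lemma:Vstar-in-F}, $V^*(\cdot,1)$ is discretely concave in $\alpha^m$. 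Therefore $A(\alpha^b)=V^*(\alpha^b+2,1)-V^*(\alpha^b+1,1)$ is nonincreasing in $\alpha^b$. Every summand is then termwise nonincreasing in $\alpha^b$, so $B(\alpha^b+1)\le B(\alpha^b)$, which closes Step 1.

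The obstacle I anticipated was that $V^*$ has no known concavity along the diagonal. Unrolling the recursion sidesteps this: it transfers the row concavity at $\alpha^b=1$ onto the diagonal. The only care needed is the tail bound, which Lemma \ref{lemma:Vstar-local-upper-bound} provides.

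\emph{Step 3 (strict bound).} The inequality $\frac{1-(1-\gamma)A}{1-\gamma+\gamma\lambda_0}<\frac{1}{\gamma\lambda_0}$ is equivalent to
\[
\gamma\lambda_0-\gamma\lambda_0(1-\gamma)A<1-\gamma+\gamma\lambda_0,
\]
that is, $-\gamma\lambda_0(1-\gamma)A<1-\gamma$. This holds because $\gamma<1$ and $A(\alpha^b)\ge 0$; indeed $A(\alpha^b)>0$ by Lemma \ref{lemma:Vstar-horizontal-lower-bound}.
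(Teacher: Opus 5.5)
Your proof is correct and uses the same ingredients as the paper's: unrolling the diagonal recursion \eqref{eq:ordered-cost-diagonal-recursion}, removing the tail with the bound $0\le B\le 1/(1-\gamma)$ from Lemma~\ref{lemma:Vstar-local-upper-bound}, and using that $A$ is nonincreasing by the row concavity in Lemma~\ref{lemma:Vstar-in-F}. The paper bounds each $A(\alpha^b+j)$ by $A(\alpha^b)$ and sums the geometric series directly, whereas you compare the series termwise to get $B(\alpha^b+1)\le B(\alpha^b)$ and convert this into the bound through the recursion; the two are equivalent, so the difference is only presentational.
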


\emph{Proof.} See Appendix \ref{app:ordered-cost-diagonal-gap}.

\begin{lemma}\label{lemma:ordered-cost-vertical-bound}
For every $\alpha^b \ge 1$ and every $\alpha^m \ge \alpha^b+1$,
\[
C(\alpha^m,\alpha^b)\le B(\alpha^b+1).
\]
\end{lemma}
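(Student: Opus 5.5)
The plan is to prove the inequality along value iteration rather than directly at $V^*$. I would add it as a third invariant to the class $\mathcal F$ and show that $T$ preserves it. The two sides of the claim are
\[
C(\alpha^m,\alpha^b)=V^*(\alpha^m+1,\alpha^b+2)-V^*(\alpha^m+1,\alpha^b+1),\qquad
B(\alpha^b+1)=V^*(\alpha^b+2,\alpha^b+2)-V^*(\alpha^b+1,\alpha^b+1).
\]
Throughout I write $m=\alpha^m+1$ and $b=\alpha^b+1$, so the claim reads $V(m,b+1)-V(m,b)\le V(b+1,b+1)-V(b,b)$ for all $m\ge b+1$.

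The key idea is a one-sided choice of actions, which avoids comparing two minima directly. We must bound the left-hand side of $TV(m,b+1)-TV(m,b)$ from above and the right-hand side of $TV(b+1,b+1)-TV(b,b)$ from below.
\begin{itemize}
\item For the left-hand side, let $u$ be a minimizer of $Q_u(m,b)$. Then
\[
TV(m,b+1)-TV(m,b)\le Q_u(m,b+1)-Q_u(m,b).
\]
\item For the right-hand side, I would use that action $0$ is optimal on the diagonal; this is the property proven for $V^*$ in Lemma~\ref{lemma:ordered-cost-diagonal-zero}, and I would need it, or its proof, to hold for the iterates as well. Then
\[
TV(b+1,b+1)-TV(b,b)=1+\gamma\lambda_0 A_V(b)+\gamma(1-\lambda_0)B_V(b+1),
\]
where $A_V$ and $B_V$ are the iterate analogues of $A$ and $B$. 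This is exactly the recursion \eqref{eq:ordered-cost-diagonal-recursion}.
\end{itemize}

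On the left-hand side the stage costs cancel, since both states share $\alpha^m=m$. Using the explicit $Q_u$ formulas from the proof of Lemma~\ref{lemma:TV-nondecreasing}, each action gives:
\begin{itemize}
\item $u=0$: the success branch lands in $(m+1,1)$ in both states, so only $\gamma(1-\lambda_0)$ times a vertical increment at column $m+1$ survives. By the induction hypothesis this is at most $\gamma(1-\lambda_0)B_V(b+1)$, which is dominated by the right-hand side because $A_V\ge0$.
\item $u=2$: the left-hand side is $\gamma\lambda_2A_V(b)+\gamma(1-\lambda_2)C_V$. The needed inequality reduces to
\[
\gamma(\lambda_0-\lambda_2)\bigl(B_V(b+1)-A_V(b)\bigr)\le 1+\ldots,
\]
which follows from $\lambda_2\le\lambda_0$ together with a gap bound of the type in Lemma~\ref{lemma:ordered-cost-diagonal-gap}.
\item $u=1$: the left-hand side is $\gamma\lambda_1B_V(b+1)+\gamma(1-\lambda_1)C_V\le\gamma B_V(b+1)$. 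The requirement becomes
\[
\gamma\lambda_0\bigl(B_V(b+1)-A_V(b)\bigr)\le 1.
\]
\end{itemize}

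I expect the $u=1$ case to be the main obstacle. Lemma~\ref{lemma:ordered-cost-diagonal-gap} controls $B(\alpha^b)-A(\alpha^b)$ only at the same index, strictly below $1/(\gamma\lambda_0)$, whereas here the indices are shifted: we need $B(b+1)$ against $A(b)$. My first attempt would be to relate the shifted quantities using concavity in $\alpha^m$ (Lemma~\ref{lemma:Vstar-in-F}), which orders horizontal increments in row $1$, so that $A(b)\ge A(b+1)$ goes the wrong way. If that fails, I would expand $B(b+1)$ once more through \eqref{eq:ordered-cost-diagonal-recursion} and combine it with the upper bound $1/(1-\gamma)$ from Lemma~\ref{lemma:Vstar-local-upper-bound}, the lower bound of Lemma~\ref{lemma:Vstar-horizontal-lower-bound} on $A$, and Assumption~\eqref{eq:parameters-assumption}.

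The induction itself would carry the statement jointly with $\mathcal F$, the diagonal optimality of action $0$ and the gap bound, starting from $V^{(0)}\equiv0$, for which every inequality holds trivially. The result then passes to $V^*$ by pointwise convergence of value iteration, exactly as in Lemma~\ref{lemma:Vstar-in-F}.
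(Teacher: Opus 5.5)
Your case analysis is the paper's own. Choosing the minimizer at $(m,b)$ is equivalent to the bound $\min_i x_i-\min_i y_i\le\max_i(x_i-y_i)$ used there. The condition you isolate for $u=1$, namely $\gamma\lambda_0\bigl(B(b+1)-A(b)\bigr)\le 1$, is exactly the paper's bracket $-1+\gamma\lambda_0\bigl(B(\alpha^b+2)-A(\alpha^b+1)\bigr)\le 0$. However, you leave this decisive step open because you misjudge the direction of concavity. The inequality $A(b)\ge A(b+1)$ is exactly what is needed: subtracting the larger quantity gives $B(b+1)-A(b)\le B(b+1)-A(b+1)<1/(\gamma\lambda_0)$ by Lemma~\ref{lemma:ordered-cost-diagonal-gap}. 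The $u=2$ case then follows because $(\lambda_0-\lambda_2)/\lambda_0\le 1$. The fallback you sketch would not close the gap in general. Bounding $B$ by $1/(1-\gamma)$ and $A$ from below by Lemma~\ref{lemma:Vstar-horizontal-lower-bound} leaves a positive term $\gamma^2\lambda_0(1-\lambda_0)/(1-\gamma)$, which is unbounded as $\gamma\to 1$, while the negative contributions stay bounded. The Assumption does not help either: it places no restriction on $\gamma$ when $\lambda_2=\lambda_0$, and this lemma does not need it.

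Even with that corrected, your value-iteration induction rests on an ingredient you have not established. Diagonal optimality of action $0$ and concavity do hold for every iterate: the proof of Lemma~\ref{lemma:ordered-cost-diagonal-zero} only uses monotonicity, and concavity is preserved by Lemma~\ref{lemma:rowwise-discrete-concavity}. But the step from $V^{(n)}$ to $V^{(n+1)}$ also needs the gap bound $\gamma\lambda_0\bigl(B_n(b+1)-A_n(b)\bigr)\le 1$ for the iterates themselves. Lemma~\ref{lemma:ordered-cost-diagonal-gap} is proved only at the fixed point, by unrolling \eqref{eq:ordered-cost-diagonal-recursion} to infinity. For iterates the recursion is $B_{n+1}(x)=1+\gamma\lambda_0A_n(x)+\gamma(1-\lambda_0)B_n(x+1)$, so unrolling produces terms $A_{n-1-j}(x+j)$ from earlier iterates. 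Bounding these by $A_n(x)$ would require something like monotonicity of horizontal increments along value iteration, which nothing in the paper or your proposal provides. Saying the gap bound is ``carried jointly'' does not show that $T$ preserves it.

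The paper sidesteps this by staying at $V^*$. It sets $M(\alpha^b):=\sup_{\alpha^m\ge\alpha^b+1}C(\alpha^m,\alpha^b)$ and uses the same three case computations to get $\bigl[M(\alpha^b)-B(\alpha^b+1)\bigr]^+\le\gamma(1-\lambda_2)\bigl[M(\alpha^b+1)-B(\alpha^b+2)\bigr]^+$. It then concludes $M(\alpha^b)\le B(\alpha^b+1)$ from the boundedness in Lemma~\ref{lemma:Vstar-local-upper-bound}. The self-reference in $C$ is thus broken by a contraction in the row index $\alpha^b$ rather than in the iteration count, so only fixed-point facts are used. Running your one-sided comparison at $V^*$ in this way, with the corrected concavity step, gives a complete proof.
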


\emph{Proof.} See Appendix \ref{app:ordered-cost-vertical-bound}.

\subsubsection{Properties of the Action-Difference Functions}
We now define the action-difference functions associated with the three available actions. Their explicit representation will be the main tool for comparing actions across the AoI state space.
\[
\Delta_{01}^*:=Q_0^*-Q_1^*,\qquad
\Delta_{02}^*:=Q_0^*-Q_2^*,\qquad
\Delta_{21}^*:=Q_2^*-Q_1^*.
\]
With this convention, $\Delta^*_{01}\leq 0$ means that sensing is no worse than communication, $\Delta^*_{02}\leq 0$ means that sensing is no worse than the joint action, and $\Delta^*_{21}\leq 0$ means that the joint action is no worse than communication. The idea is to show that these functions are single-crossing in the two coordinates: this leads to the single-switching structure of the optimal policy.

Using \eqref{eq:state-evolution}, the action-difference functions can be written as
\begin{subequations}\label{eq:deltas}
\begin{equation}
\begin{aligned}
\Delta^*_{01}(\alpha^m,\alpha^b)=&\;(c_0-c_1)+\gamma\lambda_0V^*(\alpha^m+1,1)\\
&-\gamma\lambda_1V^*(\alpha^b+1,\alpha^b+1)\\
&+\gamma(\lambda_1-\lambda_0)V^*(\alpha^m+1,\alpha^b+1),
\end{aligned}
\end{equation}

\begin{equation}
\begin{aligned}
\Delta^*_{02}(\alpha^m,\alpha^b)=&\;(c_0-c_2)+\gamma\lambda_0V^*(\alpha^m+1,1)\\
&-\gamma\lambda_2V^*(\alpha^b+1,1)\\
&+\gamma(\lambda_2-\lambda_0)V^*(\alpha^m+1,\alpha^b+1),
\end{aligned}
\end{equation}

\begin{equation}
\begin{aligned}
\Delta^*_{21}(\alpha^m,\alpha^b)=&\;(c_2-c_1)+\gamma\lambda_2V^*(\alpha^b+1,1)\\
&-\gamma\lambda_1V^*(\alpha^b+1,\alpha^b+1)\\
&+\gamma(\lambda_1-\lambda_2)V^*(\alpha^m+1,\alpha^b+1).
\end{aligned}
\end{equation}
\end{subequations}

The next step is to show that the action comparisons vary monotonically over the AoI state space. For fixed $\alpha^b$, we prove that the action-difference functions are nondecreasing in $\alpha^m$. Hence, once sensing becomes worse than another action as the monitor AoI grows, it cannot become better again. This is the single-crossing property that leads to thresholds along each horizontal row of the state space.

\begin{lemma}\label{lemma:delta-non-decreasing-in-a_s}
For each fixed $\alpha^b$, the functions $\Delta^*_{01}(\alpha^m,\alpha^b)$, $\Delta^*_{02}(\alpha^m,\alpha^b)$, and $\Delta^*_{21}(\alpha^m,\alpha^b)$ are nondecreasing in $\alpha^m$.
\end{lemma}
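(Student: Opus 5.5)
The plan is to compute the forward difference of each action-difference function in $\alpha^m$ from the explicit representation \eqref{eq:deltas}, with $\alpha^b$ held fixed, and show it is nonnegative. In each formula only the terms containing $\alpha^m$ survive the differencing. Introduce the two horizontal increments
\[
D_1(\alpha^m):=V^*(\alpha^m+2,1)-V^*(\alpha^m+1,1),\qquad
D_2(\alpha^m,\alpha^b):=V^*(\alpha^m+2,\alpha^b+1)-V^*(\alpha^m+1,\alpha^b+1).
\]
Since $\alpha^m\ge\alpha^b$, all the states involved lie in $\mathcal S$. The differencing gives
\[
\Delta^*_{01}(\alpha^m+1,\alpha^b)-\Delta^*_{01}(\alpha^m,\alpha^b)=\gamma\lambda_0D_1+\gamma(\lambda_1-\lambda_0)D_2,
\]
\[
\Delta^*_{21}(\alpha^m+1,\alpha^b)-\Delta^*_{21}(\alpha^m,\alpha^b)=\gamma(\lambda_1-\lambda_2)D_2,
\]
\[
\Delta^*_{02}(\alpha^m+1,\alpha^b)-\Delta^*_{02}(\alpha^m,\alpha^b)=\gamma\lambda_0D_1-\gamma(\lambda_0-\lambda_2)D_2.
\]

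The first two cases are immediate. By Lemma \ref{lemma:Vstar-in-F}, $V^*$ is coordinatewise nondecreasing, so $D_1,D_2\ge 0$. Combined with the ordering $\lambda_2\le\lambda_0\le\lambda_1$, every coefficient in the $\Delta^*_{01}$ and $\Delta^*_{21}$ differences is nonnegative, so both differences are nonnegative.

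The main obstacle is $\Delta^*_{02}$. Here the coefficient of $D_2$ is negative, because the joint action is less reliable than sensing, so monotonicity alone does not suffice. We need $\lambda_0D_1\ge(\lambda_0-\lambda_2)D_2$. I would bound $D_1$ from below with Lemma \ref{lemma:Vstar-horizontal-lower-bound}, applied at the state $(\alpha^m+1,1)$, which gives $D_1\ge 1/(1-\gamma+\gamma\lambda_1)$. I would bound $D_2$ from above with Lemma \ref{lemma:Vstar-local-upper-bound}, a one-step increment in the first coordinate, which gives $D_2\le 1/(1-\gamma)$. It then suffices that
\[
\lambda_0(1-\gamma)\ \ge\ (\lambda_0-\lambda_2)(1-\gamma+\gamma\lambda_1).
\]
Expanding both sides and cancelling $\lambda_0(1-\gamma)$ reduces this to $0\ge \gamma\lambda_0\lambda_1-\lambda_2+\gamma\lambda_2(1-\lambda_1)$, that is,
\[
\gamma\bigl(\lambda_0\lambda_1+\lambda_2(1-\lambda_1)\bigr)\le\lambda_2 .
\]
This is exactly Assumption \eqref{eq:parameters-assumption}, so $\Delta^*_{02}$ is nondecreasing in $\alpha^m$. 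This step is the only place where the discount-factor restriction is used, consistent with the remark following the assumption.
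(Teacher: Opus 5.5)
Your proposal is correct and follows the paper's proof essentially verbatim. It uses the same forward differences, monotonicity of $V^*$ for $\Delta^*_{01}$ and $\Delta^*_{21}$, and, for $\Delta^*_{02}$, the lower bound of Lemma~\ref{lemma:Vstar-horizontal-lower-bound} on $D_1$ combined with the upper bound $1/(1-\gamma)$ of Lemma~\ref{lemma:Vstar-local-upper-bound} on $D_2$, which reduces exactly to Assumption~\eqref{eq:parameters-assumption}.
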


\begin{proof}
See Appendix \ref{app:delta-non-decreasing-in-a_s}.
\end{proof}

We also need to understand how the sensing region changes when the base station AoI increases. The following lemma shows that the differences comparing sensing with the other two actions are nonincreasing in $\alpha^b$. Therefore, a larger base station AoI makes sensing relatively more attractive, which will imply that the lower threshold $\tau_1(\alpha^b)$ is nondecreasing.

\begin{lemma}\label{lemma:delta-non-increasing-in-a_b}
For each fixed $\alpha^m$, the functions $\Delta^*_{01}(\alpha^m,\alpha^b)$ and $\Delta^*_{02}(\alpha^m,\alpha^b)$ are nonincreasing in $\alpha^b$ for $1\leq \alpha^b\leq \alpha^m-1$.
\end{lemma}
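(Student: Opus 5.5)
The plan is to compute the one-step increment in $\alpha^b$ of each difference directly from the explicit representations \eqref{eq:deltas}. Then I would express the result through the increments $A$, $B$, $C$ and sign it using the lemmas already established. Fix $\alpha^m$ and take $1\le \alpha^b\le \alpha^m-1$. Then $\alpha^m\ge \alpha^b+1$, so all states that appear, in particular $(\alpha^m+1,\alpha^b+2)$ and $(\alpha^b+2,\alpha^b+2)$, lie in $\mathcal S$. This is also exactly the range in which $C(\alpha^m,\alpha^b)$ is defined.

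For $\Delta^*_{01}$, the term $\gamma\lambda_0V^*(\alpha^m+1,1)$ and the constant $c_0-c_1$ do not depend on $\alpha^b$. Subtracting the expressions at $\alpha^b+1$ and at $\alpha^b$ gives
\[
\Delta^*_{01}(\alpha^m,\alpha^b+1)-\Delta^*_{01}(\alpha^m,\alpha^b)
=-\gamma\lambda_1 B(\alpha^b+1)+\gamma(\lambda_1-\lambda_0)\,C(\alpha^m,\alpha^b).
\]
The key step is Lemma~\ref{lemma:ordered-cost-vertical-bound}, which gives $C(\alpha^m,\alpha^b)\le B(\alpha^b+1)$. Since $\lambda_1-\lambda_0\ge 0$ by the standing ordering $\lambda_2\le\lambda_0\le\lambda_1$, the increment is at most $-\gamma\lambda_0 B(\alpha^b+1)$. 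By Lemma~\ref{lemma:Vstar-in-F}, $V^*$ is coordinatewise nondecreasing, so $B\ge 0$ and the increment is nonpositive. Lemma~\ref{lemma:Vstar-local-upper-bound} gives the same nonnegativity.

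For $\Delta^*_{02}$, the analogous subtraction gives
\[
\Delta^*_{02}(\alpha^m,\alpha^b+1)-\Delta^*_{02}(\alpha^m,\alpha^b)
=-\gamma\lambda_2 A(\alpha^b)-\gamma(\lambda_0-\lambda_2)\,C(\alpha^m,\alpha^b).
\]
Here both $A(\alpha^b)\ge 0$ and $C(\alpha^m,\alpha^b)\ge 0$ by monotonicity of $V^*$. Together with $\lambda_0\ge\lambda_2$, this makes the increment nonpositive, with no further bounds needed.

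I expect the only nontrivial point to be the $\Delta^*_{01}$ case. There the two increments enter with opposite signs, so monotonicity alone does not suffice. One needs the comparison $C\le B(\alpha^b+1)$ between the vertical increment off the diagonal and the diagonal increment. That comparison is supplied by Lemma~\ref{lemma:ordered-cost-vertical-bound}, which in turn rests on the diagonal analysis of Lemma~\ref{lemma:ordered-cost-diagonal-zero}. Beyond this, I would only verify carefully that the index constraint $\alpha^b\le\alpha^m-1$ is exactly what keeps every evaluated state in the reachable set $\mathcal S$.
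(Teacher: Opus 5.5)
Your proposal is correct and follows the paper's proof essentially line by line. It uses the same increment identities in terms of $A(\alpha^b)$, $B(\alpha^b+1)$ and $C(\alpha^m,\alpha^b)$. As in the paper, Lemma~\ref{lemma:ordered-cost-vertical-bound} handles the sign-indefinite $\Delta^*_{01}$ increment, and plain coordinatewise monotonicity of $V^*$ (with $\lambda_2\le\lambda_0$) handles the $\Delta^*_{02}$ increment.
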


\begin{proof}
See Appendix \ref{app:delta-non-increasing-in-a_b}.
\end{proof}

The logical structure is now the following. Lemmas \ref{lemma:Vstar-horizontal-lower-bound}--\ref{lemma:ordered-cost-vertical-bound} provide the increment bounds needed to prove the single-crossing properties of Lemmas \ref{lemma:delta-non-decreasing-in-a_s} and \ref{lemma:delta-non-increasing-in-a_b}. Lemma \ref{lemma:delta-non-decreasing-in-a_s} implies that, for each fixed $\alpha^b$, the sensing region is an initial segment and the communication region is a terminal segment in $\alpha^m$. The remaining states therefore form the intermediate joint region. Lemma \ref{lemma:delta-non-increasing-in-a_b} then implies that the lower sensing threshold is nondecreasing in $\alpha^b$.

\subsubsection{Threshold Structure of the Optimal Policy}
We collect the properties derived from the previous lemmas to prove the optimal policy structure established in Theorem \ref{th:policy-structure}. The single-crossing properties stated above imply that, for every fixed value of the base station AoI, the regions in which the three actions are optimal must appear in an ordered way. This yields the desired switching-threshold structure and allows us to show that the lower switching boundary is nondecreasing.

\begin{proof}[Proof of Theorem \ref{th:policy-structure}]
For each state $(\alpha^m,\alpha^b)$, define
\[
u^*(\alpha^m,\alpha^b)=
\begin{cases}
0, & \begin{array}[t]{l}
\text{if } \Delta^*_{01}(\alpha^m,\alpha^b)\le 0\text{ and}\\
\Delta^*_{02}(\alpha^m,\alpha^b)\le 0,
\end{array}\\[0.4em]
1, & \begin{array}[t]{l}
\text{if } \Delta^*_{01}(\alpha^m,\alpha^b)>0\text{ and}\\
\Delta^*_{21}(\alpha^m,\alpha^b)\ge 0,
\end{array}\\[0.4em]
2, & \text{otherwise.}
\end{cases}
\]
This rule is optimal. Indeed, if $\Delta^*_{01}\le 0$ and $\Delta^*_{02}\le 0$, then $Q_0^*\le Q_1^*$ and $Q_0^*\le Q_2^*$, so action $0$ is optimal. If $\Delta^*_{01}>0$ and $\Delta^*_{21}\ge 0$, then $Q_1^*<Q_0^*$ and $Q_1^*\le Q_2^*$, so action $1$ is optimal. In all remaining cases action $2$ is optimal: if $\Delta^*_{01}\le 0$ and the first case fails, then necessarily $\Delta^*_{02}>0$, hence $Q_2^*<Q_0^*\le Q_1^*$; if $\Delta^*_{01}>0$ and the second case fails, then necessarily $\Delta^*_{21}<0$, hence $Q_2^*<Q_1^*<Q_0^*$.

Now fix $\alpha^b\in\mathbb N$ and define
\begin{align*}
\mathcal A_0(\alpha^b)
&:=\{\alpha^m\ge \alpha^b:\ \Delta^*_{01}(\alpha^m,\alpha^b)\le 0,\\
&\qquad \Delta^*_{02}(\alpha^m,\alpha^b)\le 0\},\\
\mathcal A_1(\alpha^b)
&:=\{\alpha^m\ge \alpha^b:\ \Delta^*_{01}(\alpha^m,\alpha^b)>0,\\
&\qquad \Delta^*_{21}(\alpha^m,\alpha^b)\ge 0\}.
\end{align*}
By construction, action $0$ is optimal on $\mathcal A_0(\alpha^b)$, action $1$ is optimal on
$\mathcal A_1(\alpha^b)$, and action $2$ is optimal on the complement of
$\mathcal A_0(\alpha^b)\cup \mathcal A_1(\alpha^b)$ in $\{\alpha^m\ge \alpha^b\}$.

By Lemma \ref{lemma:ordered-cost-diagonal-zero}, $\alpha^b\in\mathcal A_0(\alpha^b)$, so $\mathcal A_0(\alpha^b)$ is nonempty. Moreover, by Lemma \ref{lemma:delta-non-decreasing-in-a_s}, both $\Delta^*_{01}$ and $\Delta^*_{02}$ are nondecreasing in $\alpha^m$. Hence, if $\alpha^m\in\mathcal A_0(\alpha^b)$ and $\tilde\alpha^m$ satisfies $\alpha^b\le \tilde\alpha^m\le \alpha^m$, then
\begin{align*}
\Delta^*_{01}(\tilde\alpha^m,\alpha^b)
&\le \Delta^*_{01}(\alpha^m,\alpha^b)\le 0,\\
\Delta^*_{02}(\tilde\alpha^m,\alpha^b)
&\le \Delta^*_{02}(\alpha^m,\alpha^b)\le 0,
\end{align*}
so $\tilde\alpha^m\in\mathcal A_0(\alpha^b)$. Therefore $\mathcal A_0(\alpha^b)$ is an initial
segment of $\{\alpha^m\ge \alpha^b\}$.

Similarly, by Lemma \ref{lemma:delta-non-decreasing-in-a_s}, both
$\Delta^*_{01}$ and $\Delta^*_{21}$ are nondecreasing in
$\alpha^m$. Hence, if $\alpha^m\in\mathcal A_1(\alpha^b)$ and $\tilde\alpha^m\ge \alpha^m$, then
\begin{align*}
\Delta^*_{01}(\tilde\alpha^m,\alpha^b)
&\ge \Delta^*_{01}(\alpha^m,\alpha^b)>0,\\
\Delta^*_{21}(\tilde\alpha^m,\alpha^b)
&\ge \Delta^*_{21}(\alpha^m,\alpha^b)\ge 0,
\end{align*}
so $\tilde\alpha^m\in\mathcal A_1(\alpha^b)$. Therefore $\mathcal A_1(\alpha^b)$ is a terminal
segment of $\{\alpha^m\ge \alpha^b\}$.

We may thus define
\[
\tau_1(\alpha^b):=\sup \mathcal A_0(\alpha^b)\in \mathbb N_0\cup\{+\infty\},
\]
\[
\tau_2(\alpha^b):=
\begin{cases}
\inf \mathcal A_1(\alpha^b)-1, & \text{if } \mathcal A_1(\alpha^b)\neq\emptyset,\\
+\infty, & \text{if } \mathcal A_1(\alpha^b)=\emptyset.
\end{cases}
\]
Since $\mathcal A_0(\alpha^b)$ is an initial segment and $\mathcal A_1(\alpha^b)$ is a terminal
segment, we have
\[
\mathcal A_0(\alpha^b)=\{\alpha^m\ge \alpha^b:\alpha^m\le \tau_1(\alpha^b)\}
\]
\[
\mathcal A_1(\alpha^b)=\{\alpha^m\ge \alpha^b:\alpha^m>\tau_2(\alpha^b)\}.
\]
Because the two sets are disjoint, $\tau_1(\alpha^b)\le \tau_2(\alpha^b)$. Hence
\[
u^*(\alpha^m,\alpha^b)=
\begin{cases}
0, & \text{if } \alpha^m\le \tau_1(\alpha^b),\\
2, & \text{if } \tau_1(\alpha^b)<\alpha^m\le \tau_2(\alpha^b),\\
1, & \text{if } \alpha^m>\tau_2(\alpha^b).
\end{cases}
\]

It remains to prove that $\tau_1$ is nondecreasing in $\alpha^b$. Let
$\alpha^m\in\mathcal A_0(\alpha^b)$ with $\alpha^m\ge \alpha^b+1$. By
Lemma \ref{lemma:delta-non-increasing-in-a_b},
\begin{align*}
\Delta^*_{01}(\alpha^m,\alpha^b+1)
&\le \Delta^*_{01}(\alpha^m,\alpha^b)\le 0,\\
\Delta^*_{02}(\alpha^m,\alpha^b+1)
&\le \Delta^*_{02}(\alpha^m,\alpha^b)\le 0,
\end{align*}
so $\alpha^m\in\mathcal A_0(\alpha^b+1)$. Therefore
\[
\mathcal A_0(\alpha^b)\setminus\{\alpha^b\}\subseteq \mathcal A_0(\alpha^b+1).
\]
Since Lemma \ref{lemma:ordered-cost-diagonal-zero} gives
$\alpha^b+1\in\mathcal A_0(\alpha^b+1)$, and both sets are initial segments, it follows that
\[
\tau_1(\alpha^b)\le \tau_1(\alpha^b+1).
\]
Thus $\tau_1$ is nondecreasing in $\alpha^b$.
\end{proof}

\subsection{Truncation of the State Set}

Since the state space of the single-source MDP is unbounded, the optimal solution requires to solve the Bellman equation on an infinite amount of states. In practical implementations, this is clearly an impossible task. For this reason, one must truncate the state space, and solve the resulting truncated MDP. The truncated model then works as follows:
\begin{itemize}
    \item Offline, the Bellman equation is modified at the boundary: whenever a transition would lead outside the truncated state space, the corresponding value function is replaced by its clipped boundary counterpart.
    \item Online, whenever the state exceeds the truncation value, it gets projected back to the boundary.
\end{itemize}
In this way, the problem in \eqref{eq:problem-formulation} can be approximated by a finite-state MDP, which is solvable. However, such truncation of the model may lead to a policy that is different from the optimal policy of the original unbounded problem.

In this section, we explicitly quantify the error induced by truncation and derive a possible closed-form criterion for selecting the truncation level. This shows that the truncated model maintains a controlled approximation of the original MDP, and provides a justification for the numerical procedure used to compute the optimal policy. Precisely, the goal of the following analysis is to bound the value error introduced by the clipped approximation of the original unbounded dynamic program.

For a fixed $A\in\mathbb N$, define the truncated state space
\[
(\alpha^m,\alpha^b)\in \mathcal S_A := \{(\alpha^m,\alpha^b): 1\leq \alpha^b\leq \alpha^m\leq A\}.
\]
Whenever the state exceeds the boundary in the truncated model, it is clipped to $A$. Let $d^\pi(s)$ be the discounted occupancy measure of state $s$ under a policy $\pi$, namely
\[
d^\pi(s):=(1-\gamma)\sum_{k=0}^{\infty}\gamma^k \Pr^\pi\!\left(S_k=s\mid S_0=(1,1)\right).
\]
More generally, for any subset $B$ of the state space, let $d^\pi(B):=\sum_{s\in B} d^\pi(s)$. For every $i\in \mathbb N$, define $B_i:=\{(\alpha^m,\alpha^b): \alpha^m\geq A+i\}$.

\begin{lemma}\label{lem:outer-region-discounted-occupancy}
For every $i\in \mathbb N$, the discounted occupancy measure satisfies:
\[
d^\pi(B_i)\leq \gamma^{A+i-1}.
\]
\end{lemma}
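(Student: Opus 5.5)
The argument is a deterministic reachability bound: it uses only the dynamics in \eqref{eq:state-evolution} from $S_0=(1,1)$, and holds for every policy $\pi$. By \eqref{eq:state-evolution}, every transition satisfies $\alpha^m_{k+1}\le \alpha^m_k+1$, since the new monitor AoI is either $\alpha^m_k+1$ or $\alpha^b_k+1\le \alpha^m_k+1$. Starting from $\alpha^m_0=1$, induction on $k$ gives $\alpha^m_k\le k+1$ almost surely, under any policy and any realization of the outcomes $\eta_k$.

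Consequently, $\Pr^\pi(S_k\in B_i)=0$ whenever $k+1<A+i$, that is, for all $k\le A+i-2$. For the remaining times we use only the trivial bound $\Pr^\pi(S_k\in B_i)\le 1$.

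Summing in the definition of the occupancy measure, we get
\[
d^\pi(B_i)=(1-\gamma)\sum_{k=0}^{\infty}\gamma^k\Pr^\pi(S_k\in B_i)\le (1-\gamma)\sum_{k=A+i-1}^{\infty}\gamma^k=\gamma^{A+i-1},
\]
which is the claimed bound. The exchange of the sum over $s\in B_i$ with the sum over $k$ is justified by Tonelli's theorem, since all terms are nonnegative.

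No step here is a real obstacle. The only point needing care is the index bookkeeping: the first time the set $\{\alpha^m\ge A+i\}$ can be reached is $k=A+i-1$, and this is exactly what produces the exponent $A+i-1$. Note that the bound does not use the random failures, so it is somewhat loose, but it is already what the statement asks for.
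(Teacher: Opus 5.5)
Your proposal is correct and matches the paper's proof: $\alpha^m$ increases by at most one per slot from $\alpha^m_0=1$, so $B_i$ cannot be visited before $k=A+i-1$, and bounding the remaining probabilities by one gives $(1-\gamma)\sum_{k\ge A+i-1}\gamma^k=\gamma^{A+i-1}$. Your inequality $\alpha^b_k+1\le\alpha^m_k+1$ relies on the reachable-set invariant $\alpha^b_k\le\alpha^m_k$, which the paper establishes when it defines $\mathcal S$, so the argument is complete.
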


\begin{proof}
From the state transition in \eqref{eq:state-evolution} $\alpha^m$ can increase by at most one unit per slot under any action. Therefore, reaching $B_i$ from $(1,1)$ requires at least $A+i-1$ time steps. It follows that
\begin{align*}
d^\pi(B_i)
&=(1-\gamma)\sum_{k=0}^{\infty}\gamma^k
\Pr^\pi\bigl(S_k\in B_i\mid S_0=(1,1)\bigr)\\
&\leq (1-\gamma)\sum_{k=A+i-1}^{\infty}\gamma^k
=\gamma^{A+i-1}.
\end{align*}
\end{proof}

Now let $V^\infty_{\pi^*}(1,1)$ be the value of the unbounded MDP at $(1,1)$, and let $V^A_{\pi_A}(1,1)$ be the value of the truncated MDP at the same initial state, where $\pi^*$ and $\pi_A$ are optimal for the unbounded and truncated models, respectively.

\begin{theorem}\label{th:truncation-error}
The error due to the truncation satisfies
\[
V^\infty_{\pi^*}(1,1)-V^A_{\pi_A}(1,1)\leq \frac{\gamma^A}{(1-\gamma)^2}.
\]
\end{theorem}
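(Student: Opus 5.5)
The idea is to compare the two models along trajectories of the optimal unbounded policy $\pi^*$, and to use Lemma \ref{lem:outer-region-discounted-occupancy} to control how much discounted mass can ever reach the clipped region.

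\emph{Step 1: reduction to a fixed policy.} Since $\pi_A$ is optimal for the truncated model, $V^A_{\pi_A}(1,1)\ge \min_\pi V^A_\pi(1,1)$. We need an upper bound on the difference $V^\infty_{\pi^*}-V^A_{\pi_A}$, which is the same as a lower bound on $V^A_{\pi_A}$. So the natural object is $V^\infty_{\pi^*}(1,1)-V^\infty_{\pi}(1,1)$ with $\pi$ the policy obtained by running $\pi_A$ on the projected states. Optimality of $\pi^*$ gives $V^\infty_{\pi^*}\le V^\infty_{\pi}$, so
\[
V^\infty_{\pi^*}(1,1)-V^A_{\pi_A}(1,1)\le V^\infty_{\pi}(1,1)-V^A_{\pi_A}(1,1).
\]
It therefore suffices to bound the cost discrepancy between the unbounded and clipped dynamics under a common policy $\pi$.

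\emph{Step 2: coupling the two systems.} I would couple the unbounded system and the clipped system using the same actions and the same success outcomes $\eta_k$. The two trajectories coincide until the first time $\alpha^m$ exceeds $A$. From \eqref{eq:state-evolution}, the clipped monitor AoI is always the pointwise minimum of the true one and $A$, and the base AoI is handled the same way. The action costs are identical in the two systems, so the per-slot discrepancy is
\[
\alpha^m_k-\min(\alpha^m_k,A)=\sum_{i\ge1}\mathds 1\{\alpha^m_k\ge A+i\}.
\]

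\emph{Step 3: summing over the outer regions.} Taking discounted expectations gives
\[
V^\infty_\pi(1,1)-V^A_{\pi_A}(1,1)=\frac{1}{1-\gamma}\sum_{i\ge1}d^\pi(B_i).
\]
Applying Lemma \ref{lem:outer-region-discounted-occupancy} for each $i$ bounds this by
\[
\frac{1}{1-\gamma}\sum_{i\ge1}\gamma^{A+i-1}=\frac{\gamma^A}{(1-\gamma)^2},
\]
which is the claimed bound.

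\emph{Main obstacle.} The delicate point is Step 2: whether the clipped and unclipped trajectories really stay ordered, with the discrepancy confined to $\alpha^m$. The base AoI also gets clipped, and after a successful communication the new $\alpha^m$ equals the old $\alpha^b+1$, so the clipped $\alpha^b$ feeds back into the monitor AoI. My first attempt would be to prove by induction that the clipped state equals the componentwise minimum of the true state and $(A,A)$; the max-plus form of the transitions makes this plausible. If the coupling becomes awkward, the fallback is to avoid it and instead compare Bellman operators. One would show that $V^\infty_\pi$ and $V^A$ differ only through the term $\alpha^m-A$ on states in $B_1$, and that the differences propagate with weight $\gamma$. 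Summing these contributions with the occupancy bound yields the same geometric series.
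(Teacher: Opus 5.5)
Your proposal is correct and follows essentially the same route as the paper: the policy you call $\pi$ (the extension of $\pi_A$, not $\pi^*$ as your opening sentence suggests) is exactly the paper's $\pi_A^{\mathrm{ext}}(\alpha^m,\alpha^b)=\pi_A(\min\{\alpha^m,A\},\min\{\alpha^b,A\})$, and from there both arguments couple the two processes under common channel outcomes, write the per-slot discrepancy as $\sum_{i\ge1}\mathds{1}\{\alpha^m_k\ge A+i\}$, and apply Lemma~\ref{lem:outer-region-discounted-occupancy}. The coupling step you flag is used without comment in the paper, and your induction does close it: every post-transition coordinate is either $1$ or a pre-transition coordinate plus one, and $\min\{\min\{x,A\}+1,A\}=\min\{x+1,A\}$, so the clipped state is always the componentwise minimum of the true state and $(A,A)$, and the two systems take identical actions.
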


\begin{proof}
Let $\pi_A^{\mathrm{ext}}$ be the extension of $\pi_A$ to the unbounded model defined by
\[
\pi_A^{\mathrm{ext}}(\alpha^m,\alpha^b)=\pi_A\!\left(\min\{\alpha^m,A\},\min\{\alpha^b,A\}\right).
\]
Since $\pi_A^{\mathrm{ext}}$ is feasible for the unbounded MDP,
\[
V^\infty_{\pi^*}(1,1)\leq V^\infty_{\pi_A^{\mathrm{ext}}}(1,1).
\]
Coupling the unbounded and truncated processes under the same realization of the channel outcomes,
\begin{align*}
V^\infty_{\pi^*}(1,1)-V^A_{\pi_A}(1,1)
&\leq V^\infty_{\pi_A^{\mathrm{ext}}}(1,1)-V^A_{\pi_A}(1,1)\\
&=\mathbb E\Biggl[\sum_{k=0}^{\infty}\gamma^k
\bigl(\alpha_k^m-\min\{\alpha_k^m,A\}\bigr)\\
&\qquad\Big|\,S_0=(1,1)\Biggr]\\
&=\frac{1}{1-\gamma}
\sum_{s\notin \mathcal S_A} d^{\pi_A^{\mathrm{ext}}}(s)(\alpha^m-A).
\end{align*}
Every reachable state satisfies $\alpha^m\geq \alpha^b$, so $s\notin \mathcal S_A$ implies $\alpha^m>A$. Hence
\[
\alpha^m-A=\sum_{i=1}^{\infty}\mathds{1}\{\alpha^m\geq A+i\}.
\]
Substituting into the previous expression and using Lemma \ref{lem:outer-region-discounted-occupancy},
\begin{align*}
V^\infty_{\pi^*}(1,1)-V^A_{\pi_A}(1,1)
&\leq \frac{1}{1-\gamma}\sum_{i=1}^{\infty}
 d^{\pi_A^{\mathrm{ext}}}(B_i)\\
&\leq \frac{1}{1-\gamma}\sum_{i=1}^{\infty}\gamma^{A+i-1}
=\frac{\gamma^A}{(1-\gamma)^2}.
\end{align*}
\end{proof}

This closed-form bound can be directly inverted to choose the truncation level for a given tolerance. If one requires the truncation error to be at most $\varepsilon$, it is enough to impose $\gamma^A/(1-\gamma)^2\leq \varepsilon$, which leads to
\begin{equation}\label{eq:truncation-threshold}
A\geq \left\lceil\frac{\log\bigl(\varepsilon(1-\gamma)^2\bigr)}{\log(\gamma)}\right\rceil.
\end{equation}

The value $\varepsilon$ is a tolerance on the differencial total discounted cost. A more intuitive value is $\hat\varepsilon:=(1-\gamma)\varepsilon$, which corresponds to the tolerance on the differencial discounted cost per time slot. In other words, $\hat\varepsilon$ is the error that, if experienced for every time slot, would lead to a total differencial discounted cost equal to $\varepsilon$. Note that $\hat\varepsilon$ has the same scale as the cost function $g(S,u)$, so it has a clear interpretation. The criterion \eqref{eq:truncation-threshold} becomes
\begin{equation}\label{eq:truncation-threshold-2}
A\geq \left\lceil\frac{\log\bigl(\hat\varepsilon(1-\gamma)\bigr)}{\log(\gamma)}\right\rceil.
\end{equation}
Table \ref{tab:truncation_levels_normalized} reports the minimum values of $A$ satisfying \eqref{eq:truncation-threshold-2} for different values of $\gamma$ and $\hat\varepsilon$.

\begin{table}[ht]
\centering
\caption{Minimum truncation level $A$ satisfying \eqref{eq:truncation-threshold-2}.}
\label{tab:truncation_levels_normalized}
\resizebox{\linewidth}{!}{%
\begin{tabular}{c|ccccc}
\hline
$\hat\epsilon$ & $\gamma=0.50$ & $\gamma=0.70$ & $\gamma=0.85$ & $\gamma=0.90$ & $\gamma=0.95$ \\
\hline
$1$             & 1  & 4  & 12 & 22 & 59  \\
$5\cdot 10^{-1}$& 2  & 6  & 16 & 29 & 72  \\
$10^{-1}$       & 5  & 10 & 26 & 44 & 104 \\
$5\cdot 10^{-2}$& 6  & 12 & 31 & 51 & 117 \\
$10^{-2}$       & 8  & 17 & 41 & 66 & 149 \\
$5\cdot 10^{-3}$& 9  & 19 & 45 & 73 & 162 \\
$10^{-3}$       & 11 & 23 & 55 & 88 & 194 \\
\hline
\end{tabular}%
}
\end{table}

\section{Multiple process-monitor pairs Scenario}\label{sec:multi-source-scenario}

In this section, we study a constrained multiple process-monitor pairs scheduling problem in which a single base station must share its sensing and communication capability among several independent monitor-source pairs. In contrast to the single process-monitor case, the decision process can no longer be treated separately for each source, since the base station can actively serve only a limited number of subsystems at each time slot. The controller must therefore decide, at every time step, which subsystems should be addressed and which ISAC action should be applied to each selected subsystem. This coupling across subsystems makes the exact dynamic programming solution intractable when the number of sources grows.

To model this setting, we extend the single process-monitor formulation by introducing an idle action, which represents the decision not to address a given subsystem in a given slot. This leads naturally to a constrained restless multi-armed bandit (RMAB) formulation. The key difficulty is that every subsystem (arm) keeps evolving over time, including those that are not selected, so the overall state process remains coupled through the activation constraint. A standard way to handle this difficulty is through a Lagrangian relaxation, which decouples the global scheduling problem into a family of relaxed single-arm problems. This relaxation is the basis for Whittle index policies, which assign to each subsystem a priority value and then activate the subsystems with the largest priorities. In this way, a high-dimensional scheduling problem is replaced by an offline index computation and a simple online ranking rule.

Compared with standard AoI scheduling models, the relaxed single-arm problem considered here has two distinctive features. First, each subsystem state contains two coupled freshness variables, corresponding to the AoI at the monitor and at the base station. Second, each active arm has three possible active modes, with different state transitions, success probabilities, and operational costs. Therefore, the resulting index construction must account not only for whether a source should be scheduled, but also for which sensing-communication action should be selected once the source is activated.

Our analysis focuses on two regimes. First, we identify a sufficient condition under which the relaxed single-arm problem is Whittle-indexable, so that an exact Whittle index policy can be defined. Second, since this sufficient condition need not hold for all parameter values, we also develop an approximate Whittle index policy that remains computationally light and can still be used beyond the guaranteed indexable regime.

The rest of the section is organized as follows. We first formulate the multi-source problem as an RMAB and derive the relaxed single-arm problem. We then study indexability and the exact Whittle index policy. Finally, we introduce an approximate index policy for the non-indexable regime.

\subsection{RMAB Formulation}
	
We consider a multi-source scenario with $N$ total subsystems, each of which can be modeled as a single process-monitor MDP as presented in Section \ref{sec:single-source-scenario}. Every subsystem $i$ has its own parameters: $c^i_0$, $c^i_1$, $c^i_2$, $\lambda^i_0$, $\lambda^i_1$, $\lambda^i_2$, and $A^i$ for numerical implementation. A constraint forces the base station to address at most $M<N$ sources, each with its own optimal action. To implement this constraint, we now introduce action $u^i=3$, which means that the subsystem $i$ stays idle. When a subsystem stays idle, its state variables both increase by $1$. The optimal policy for the base station is now a vector $\bm{\pi}=\{\pi^i\}_{i=1}^N$. Furthermore, let $d_{\pi^i}(k) := 1$ if $\pi^i(k)\in\{0,1,2\}$ and $0$ if $\pi^i(k)=3$ for subsystem $i$ and for time step $k$, i.e. $d_{\pi^i}(k) = 1$ if the base station addresses source $i$ at time step $k$. The RMAB can be formulated as follows:
\begin{equation}\label{eq:multi-source-problem}
\begin{aligned}
&\min_{\boldsymbol{\pi}}\mathbb{E}\Bigr[\sum_{k=0}^{\infty}\gamma^k\sum_{i=1}^Ng(S^i_k,u^i_k)\Bigl]\\&
\text{s.t. } \sum_{i=1}^Nd_{\pi^i}(k)\leq M,\quad\forall k \geq 0,
\end{aligned}
\end{equation}
where
\[
\begin{aligned}
g(S^i_k,u^i_k)&=\alpha^{m,i}_k+c^i_0\mathds{1}\{u^i_k=0\}\\
&\quad+c^i_1\mathds{1}\{u^i_k=1\}+c^i_2\mathds{1}\{u^i_k=2\}.
\end{aligned}
\]

\subsection{Lagrangian Relaxation}
The multiple process-monitor pairs problem in \eqref{eq:multi-source-problem} is computationally demanding to solve. To obtain a scheduling rule, we use a Lagrangian relaxation, which is the standard starting point for Whittle index methods. The idea is to replace the activation constraint with a scalar term in the objective function. In this way, the constraint is no longer enforced for every time step $k$; instead, one introduces a multiplier $W$ that measures the value of leaving a subsystem idle. This relaxed formulation is useful because it turns the original coupled problem into a collection of local control problems, one for each subsystem.

Imposing a Lagrange multiplier $W$, the optimization problem in \eqref{eq:multi-source-problem} can be written as:
\begin{equation}\label{eq:relaxed-problem}
\min_{\boldsymbol{\pi}}\mathbb{E}\Biggl[
\sum_{k=0}^{\infty}\gamma^k\sum_{i=1}^N
\bigl(g(S^i_k,u^i_k)-W\mathds{1}\{u^i_k=3\}\bigr)
\Biggr]
\end{equation}

The Lagrangian relaxation decouples the constrained RMAB into $N$ independent discounted dynamic programs, one for each subsystem $i$. Let $s^{+1}$ denote the state whose coordinates are those in state $s$ increased by $1$, and let $s'_u$ denote the arrival state after a successful realization of action $u$ in state $s$. For a fixed subsystem $i$, the relaxed Bellman equation is
\[
V(s^i,W)=\min_{u\in\{0,1,2,3\}}Q_u(s^i,W),
\]
where
\[
Q_3(s^i,W)=g(s^i,3)-W+\gamma V(s^{i,+1},W),
\]
and, for $u\in\{0,1,2\}$,
\[
\begin{aligned}
Q_u(s^i,W)&=g(s^i,u)+\gamma\lambda^i_uV(s^{\prime i}_u,W)\\
&\quad+\gamma(1-\lambda^i_u)V(s^{i,+1},W).
\end{aligned}
\]
Accordingly, we define the active-idle difference functions
\begin{equation}
\Delta^i_{u3}(s^i,W):=Q_u(s^i,W)-Q_3(s^i,W),\qquad u\in\{0,1,2\}.
\end{equation}
By direct substitution,
\begin{subequations}\label{eq:deltas-relaxed}
\begin{align}
\Delta^i_{03}(s^i,W)&=c^i_0+W-\gamma\lambda^i_0\bigl[V(s^{i,+1},W)-V(s^{\prime i}_0,W)\bigr],\\
\Delta^i_{13}(s^i,W)&=c^i_1+W-\gamma\lambda^i_1\bigl[V(s^{i,+1},W)-V(s^{\prime i}_1,W)\bigr],\\
\Delta^i_{23}(s^i,W)&=c^i_2+W-\gamma\lambda^i_2\bigl[V(s^{i,+1},W)-V(s^{\prime i}_2,W)\bigr].
\end{align}
\end{subequations}
These quantities compare each active action with the idle one in the relaxed problem. In particular, if all three of them are nonnegative at a given state, then staying idle is optimal for that state. Let $P^i(W)$ be the set of states for which the optimal action is $u=\mathsf{idle}$, i.e.
\[
P^i(W):=\{s^i:u^{*,i}(s^i)=3\}.
\]
Equivalently,
\[
\begin{aligned}
P^i(W)=\{s^i:&\Delta^i_{03}(s^i,W)\geq 0,
\Delta^i_{13}(s^i,W)\geq 0,\\
&\Delta^i_{23}(s^i,W)\geq 0\}.
\end{aligned}
\]

The notion of indexability is central in Whittle theory. A subsystem is said to be indexable if increasing the idle subsidy $W$ can only enlarge the set of states in which staying idle is optimal. In other words, as passivity becomes more attractive, the system should move monotonically toward the idle action. The problem in \eqref{eq:multi-source-problem} is indexable if, for any $W'>W$, $P^i(W)\subseteq P^i(W')$.

If the problem is indexable, then the Whittle index is well defined: for a given state, it is the minimum value of the subsidy $W$ for which action $\mathsf{idle}$ becomes optimal in that state. In mathematical terms, the Whittle index of state $s$ in subsystem $i$ is defined as
\begin{equation}\label{eq:whittle-index}
W^i(s^i):=\inf\{W:s^i\in P^i(W)\}.
\end{equation}

This index can be interpreted as a state-dependent priority value for each subsystem, and it directly induces the Whittle index policy. In practice, the Whittle index of every state of every subsystem is computed offline and stored. Then, online, at each time step, the base station observes the current state of each subsystem, retrieves the corresponding index values, and ranks the subsystems in decreasing order of priority. The base station then activates the $M$ subsystems with the largest indices and applies to each of them the corresponding local optimal action of the relaxed single-arm problem, while the remaining $N-M$ subsystems stay idle. In this way, the original high-dimensional scheduling problem is replaced by an offline index computation and a simple online lookup-and-sorting procedure.

We now state a closed-form sufficient condition under which subsystem $i$ is indexable, so that the Whittle index policy can be used.
\begin{theorem}\label{th:indexability}
If
\[
\gamma\leq\frac{1}{1+\lambda^i_1},
\]
then subsystem $i$ is indexable.
\end{theorem}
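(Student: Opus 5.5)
The plan is to establish indexability through the standard sufficient condition: for every state $s$, each active-idle difference $\Delta^i_{u3}(s,W)$, $u\in\{0,1,2\}$, is nondecreasing in $W$. If this holds and $s\in P^i(W)$, then all three differences are nonnegative at $W$, hence also at any $W'>W$, so $s\in P^i(W')$. By \eqref{eq:deltas-relaxed}, $\partial_W\Delta^i_{u3}=1-\gamma\lambda^i_u\,\partial_W[V(s^{+1},W)-V(s'_u,W)]$. The work therefore reduces to a Lipschitz-type bound on how fast the value gap $V(s^{+1},W)-V(s'_u,W)$ can change with $W$. Concretely, the target is
\[
\bigl|[V(s^{+1},W')-V(s'_u,W')]-[V(s^{+1},W)-V(s'_u,W)]\bigr|\le \frac{W'-W}{\gamma\lambda^i_u}.
\]

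To get this bound, I will use the standard envelope representation. For a fixed policy $\pi$, the relaxed value is affine in $W$ with slope $-D_\pi(s):=-\mathbb E_\pi[\sum_k\gamma^k\mathds 1\{u_k=3\}\mid S_0=s]$. The optimal $V(s,W)$ is a pointwise minimum of such affine functions, so it is concave and piecewise differentiable in $W$, with derivative $-D_{\pi^*_W}(s)$ almost everywhere. Each $D\in[0,1/(1-\gamma)]$, so the crude bound on the gap derivative is $1/(1-\gamma)$. This crude bound gives indexability only when $\gamma\lambda_u\le 1-\gamma$, which is weaker than required.

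The sharper argument couples the two trajectories from $s^{+1}$ and $s'_u$ under the same optimal policy family, using that $s'_u$ and $s^{+1}$ differ only in AoI coordinates. The gap in discounted idle time is then controlled by how long the two processes can disagree before they merge. The mechanism for merging is this: any successful communication resets $\alpha^m$ to $\alpha^b+1$, and any successful sensing or joint action resets $\alpha^b$. Once both coordinates are reset, the states coincide. The relevant bound should come from the best reset rate, namely communication with probability $\lambda^i_1$, which gives a geometric merging time. From this I aim to show
\[
|D_\pi(s^{+1})-D_\pi(s'_u)|\le \frac{1}{1-\gamma(1-\lambda_1^i)}\quad\text{or a similar quantity}.
\]
Combined with $\gamma\lambda_u\le\gamma\lambda_1$, this should reduce to $\gamma\lambda_1/(1-\gamma+\gamma\lambda_1)\le1$ type conditions. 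An alternative route, which I will try if the first does not close, is to use the simpler bound $D\le 1/(1-\gamma)$ with the difference structure: a gap of at most $\gamma/(1-\gamma)$ in discounted idle time after one step yields the condition $\gamma\lambda_1\cdot\frac{1}{1-\gamma}\cdot(1-\gamma)\ldots$. I would tune it to obtain exactly $\gamma(1+\lambda_1)\le 1$, i.e.\ $\gamma\lambda_1\le 1-\gamma$, which is precisely the crude bound with the worst-case $\lambda_u\le\lambda_1$.

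I expect the crude route already gives the theorem. Using $\lambda^i_2\le\lambda^i_0\le\lambda^i_1$, the condition $\gamma\le 1/(1+\lambda^i_1)$ is equivalent to $\gamma\lambda^i_u\le\gamma\lambda^i_1\le 1-\gamma$ for all $u$. Then $\partial_W\Delta^i_{u3}\ge 1-\gamma\lambda_u^i/(1-\gamma)\ge0$. The main obstacle is making the gap-derivative bound rigorous. $V$ is not differentiable everywhere, so I will work with finite differences: for $W'>W$, take $\pi_W,\pi_{W'}$ optimal and sandwich $V(x,W')-V(x,W)$ between $-(W'-W)D_{\pi_{W'}}(x)$ and $-(W'-W)D_{\pi_W}(x)$. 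Both lie in $[-(W'-W)/(1-\gamma),0]$, so the difference of gaps is bounded by $(W'-W)/(1-\gamma)$ in absolute value. A further technical check is that the Banach-space/contraction setup of Section \ref{sec:single-source-scenario} carries over to the relaxed operator with the idle action, since it adds only a bounded $-W$ term and the transition $s\mapsto s^{+1}$ still satisfies $\mathbb E[w(S')]\le\rho w(S)$.
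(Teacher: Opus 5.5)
Your committed argument is correct and is essentially the paper's proof. You sandwich $-(W'-W)D_{\pi_{W'}}(x)\le V(x,W')-V(x,W)\le -(W'-W)D_{\pi_W}(x)$, bound the change of the value gap by $(W'-W)/(1-\gamma)$, and conclude $\Delta^i_{u3}(s,W')-\Delta^i_{u3}(s,W)\ge (W'-W)\bigl(1-\gamma\lambda^i_u/(1-\gamma)\bigr)\ge 0$. The only difference is that you derive the Lipschitz-in-$W$ bound on $V$ from the fixed-policy envelope, whereas the paper imports it from Cs\'aji's Theorem~12.

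The coupling detour is unnecessary. Your intermediate remark that the crude bound is ``weaker than required'' is also mistaken, since $\gamma\lambda^i_u\le\gamma\lambda^i_1\le 1-\gamma$ is exactly the hypothesis $\gamma\le 1/(1+\lambda^i_1)$.
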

\begin{proof}
Fix $W'>W$. Since only the idle action depends explicitly on $W$, the relaxed value function is nonincreasing in $W$ and satisfies
\[
0\leq V(s,W)-V(s,W')\leq \frac{W'-W}{1-\gamma},\qquad \forall s.
\]
Using \eqref{eq:deltas-relaxed}, for any $u\in\{0,1,2\}$ we obtain
\begin{align*}
\Delta^i_{u3}&(s^i,W')-\Delta^i_{u3}(s^i,W)\\
&=(W'-W)+\gamma\lambda^i_u\Bigl[\\
&\qquad \bigl(V(s^{i,+1},W)-V(s^{i,+1},W')\bigr)\\
&\qquad-\bigl(V(s^{\prime i}_u,W)-V(s^{\prime i}_u,W')\bigr)\Bigr]\\
&\geq (W'-W)\\
&\quad-\gamma\lambda^i_u\left\|V(\cdot,W)-V(\cdot,W')\right\|_\infty.
\end{align*}
An upper bound for the difference $\left\|V(\cdot,W)-V(\cdot,W')\right\|_\infty$ is given by \cite[Theorem 12]{csaji2008value_function}:
\begin{equation}\label{eq:V-difference-bound}
\left\|V(\cdot,W)-V(\cdot,W')\right\|_\infty \leq \frac{\bigl|W-W'\bigr|}{1-\gamma}.
\end{equation}
Therefore,
\[
\begin{aligned}
\Delta^i_{u3}(s^i,W')-\Delta^i_{u3}(s^i,W)
&\geq (W'-W)\left(1-\frac{\gamma\lambda^i_u}{1-\gamma}\right).
\end{aligned}
\]

Since $\lambda^i_u\leq \lambda^i_1$ for all $u\in\{0,1,2\}$ and $\gamma\leq 1/(1+\lambda^i_1)$, the right-hand side is nonnegative. Hence each $\Delta^i_{u3}(s^i,W)$ is nondecreasing in $W$. Consequently, if $s^i\in P^i(W)$, then $\Delta^i_{u3}(s^i,W')\geq 0$ for all $u\in\{0,1,2\}$, which proves that $P^i(W)\subseteq P^i(W')$.
\end{proof}

Now let $s^{+1}$ denote the state whose coordinates are those in state $s$ increased by $1$, and let $s'_u$ denote the arrival state after a successful realization of action $u$ in state $s$.
At $W=W^i(s^i)$, we have
\[
\min_{u\in\{0,1,2\}}\Delta^i_{u3}(s^i,W^i(s^i))=0,
\]
which directly yields the following characterization:
\begin{equation}\label{eq:whittle-bellman}
\begin{aligned}
W^i(s^i)=\max_{u^i\in\{0,1,2\}}\Bigl[&
\gamma\lambda^i_uV\bigl(s^{i,+1},W^i(s^i)\bigr)\\
&-\gamma\lambda^i_uV\bigl(s^{\prime i}_u,W^i(s^i)\bigr)-c^i_u\Bigr].
\end{aligned}
\end{equation}

\subsection{Whittle Index Policy Under Indexable Regime}

We now focus on the regime in which the sufficient condition of Theorem \ref{th:indexability} is satisfied, so that the Whittle index is well defined as in \eqref{eq:whittle-index}. In this case, the relaxed single-arm problem admits an exact Whittle index for every state, which can be used as a priority value for scheduling.

Under the indexable regime, the Whittle index of each state can be numerically computed through the bisection method reported in Algorithm \ref{alg:whittle-exact}.

\begin{algorithm}[ht]
\caption{Whittle index computation}
\label{alg:whittle-exact}
\begin{algorithmic}[1]
    \For{all states $s^i$ in the state space}
		\State Choose $W_{\min}$ and $W_{\max}$ such that
		$s^i \notin P^i(W_{\min})$ and $s^i \in P^i(W_{\max})$
		\State Initialization:
		$\underline{W}^{(0)}(s^i) \leftarrow W_{\min}$,
		$\overline{W}^{(0)}(s^i) \leftarrow W_{\max}$
		\For{$k = 1, \ldots, K_{\max}$}
		\State Set
		\[
		W^{(k)}(s^i) \leftarrow
		\frac{\underline{W}^{(k-1)}(s^i)+\overline{W}^{(k-1)}(s^i)}{2}
		\]
		\State Solve the local dynamic program using
		$W^i = W^{(k)}(s^i)$
		\If{$s^i \in P^i(W^{(k)}(s^i))$}
		\State $\underline{W}^{(k)}(s^i) \leftarrow \underline{W}^{(k-1)}(s^i)$
		\State $\overline{W}^{(k)}(s^i) \leftarrow W^{(k)}(s^i)$
		\Else
		\State $\underline{W}^{(k)}(s^i) \leftarrow W^{(k)}(s^i)$
		\State $\overline{W}^{(k)}(s^i) \leftarrow \overline{W}^{(k-1)}(s^i)$
		\EndIf
		\If{$\left|\overline{W}^{(k)}(s^i)-\underline{W}^{(k)}(s^i)\right| < \varepsilon$}
		\State break
		\EndIf
		\EndFor
		\State Compute
		\[
		W(s^i)=\frac{\underline{W}^{(k)}(s^i)+\overline{W}^{(k)}(s^i)}{2}
		\]
		\EndFor
\end{algorithmic}
\end{algorithm}
	
The output of Algorithm \ref{alg:whittle-exact} will be the Whittle index of each state $s^i$ of subsystem $i$, up to a certain tolerance $\varepsilon$. The process must be repeated for every $i\in\{1,2,\dots,N\}$. Online, the implementation of the Whittle index policy is done through Algorithm \ref{alg:whittle-online}:

\begin{algorithm}[ht]
\caption{Implementation of the Whittle index policy}
\label{alg:whittle-online}
\begin{algorithmic}[1]
    \For{$k=0,1,2,\dots$}
        \State Observe the current state $s_k^i$ of each subsystem $i\in\{1,2,\dots,N\}$
        \State Retrieve the Whittle indices $\{W^i(s_k^i)\}_{i=1}^N$
        \State Sort the indices $W^i(s_k^i)$ in decreasing order
        \State Let $\mathcal I_k$ be the set of the $M$ subsystems with the largest indices
        \ForAll{$i\in\mathcal I_k$}
            \State Select the active action
            \[
            u_k^i \in \arg\min_{u\in\{0,1,2\}} Q_u^i\bigl(s_k^i, W^i(s_k^i)\bigr)
            \]
        \EndFor
        \ForAll{$i\notin\mathcal I_k$}
            \State Set $u_k^i \leftarrow 3$
        \EndFor
        \State Apply the action vector $u_k=(u_k^1,\dots,u_k^N)$
    \EndFor
\end{algorithmic}
\end{algorithm}

\subsection{Approximate Whittle Index Policy Beyond the Indexable Regime}\label{sec:AWIP}

The sufficient condition for indexability presented above can be restrictive. Moreover, having a two-dimensional state space, calculating the Whittle index for all states is generally challenging.

In this section, we propose a heuristic, well-performing policy that assigns an index $\widetilde{W}(s^i)$ to every state through linear interpolation. More precisely, we first compute the index values through Algorithm \ref{alg:whittle-exact} on a set of anchor states, namely the diagonal states and the boundary states, and then use these values to interpolate the index over the interior of the state space. The interpolated value is then used to solve the local dynamic program only once, which yields the final approximate index $\widetilde{W}(s^i)$. In this way, we obtain a simple index-based policy that can be computed over the whole state space and used for scheduling in a broad range of parameter regimes. This makes the policy attractive both when the sufficient indexability condition is not satisfied and, more generally, when a much faster computational procedure is needed. The offline computation method for the approximate indices is described in Algorithm \ref{alg:whittle-boundary-interp}.

Online, similarly to the Whittle index policy, for every time step the base station addresses the first $M$ subsystems with the highest $\widetilde{W}(s^i)$, applying Algorithm \ref{alg:whittle-online} to the approximate indices.

\begin{algorithm}[ht]
\caption{Approximate Whittle index computation}
\label{alg:whittle-boundary-interp}
\begin{algorithmic}[1]
	\State Define
		\[
\begin{aligned}
\mathcal D^i&:=\{(a,a): a=1,\dots,A^i\},\\
\mathcal B^i&:=\{(A^i,b): b=1,\dots,A^i\}.
\end{aligned}
\]
			
		\ForAll{$s^i \in \mathcal D^i \cup \mathcal B^i$}
			\State Find $\bar W^i(s^i)$ through Algorithm \ref{alg:whittle-exact}
			\State Set $\widetilde{W}^i(s^i)\gets \bar W^i(s^i)$
			\EndFor
			
			\For{$b=1,\dots,A^i-1$}
			\For{$a=b+1,\dots,A^i-1$}
			\State Compute the linear interpolation
			\[
			\widehat{W}^i(s^i)\gets\widetilde{W}^i(b,b)+\frac{a-b}{A^i-b}\Big(
			\widetilde{W}^i(A^i,b)-\widetilde{W}^i(b,b)\Big)
			\]
			\State Solve the local dynamic program using $W^i=\widehat W^i(s^i)$
			\State Compute
			\[
\begin{aligned}
\widetilde{W}^i(s^i)=\max_{u^i \in \{0,1,2\}}\Big[&
\gamma\lambda^i_u V\bigl(s^{i,+1},\widehat{W}^i(s^i)\bigr)\\
&-\gamma\lambda^i_u V\bigl(s^{\prime i}_u, \widehat W^i(s^i)\bigr)-c^i_u\Big]
\end{aligned}
\]
			\EndFor
		\EndFor
\end{algorithmic}
\end{algorithm}

The output of Algorithm \ref{alg:whittle-boundary-interp} is an index approximation for each state $s^i$ of subsystem $i$, and the process must be repeated for every $i\in\{1,2,\dots,N\}$. This heuristic solution computes approximate Whittle indices for the boundary states $(A^i,\alpha^{b,i})$ and $(\alpha,\alpha)$ through Algorithm \ref{alg:whittle-exact}, assigns values $\widehat{W}(s^i)$ to all other states by linear interpolation, and then updates them by solving the corresponding dynamic program only once.

Whenever the exact Whittle index is well defined, the index assigned to any boundary state through Algorithm \ref{alg:whittle-exact} coincides with the real Whittle index; otherwise, it can be viewed as a heuristic subsidy score. For the inner states, $\widetilde{W}(s^i)$ are also heuristic priority scores. This avoids the full iterative procedure needed to compute the Whittle index over the whole state space, and also provides a practical method when indexability is not guaranteed. In the indexable case, Theorem \ref{th:UB-for-whittle-error} quantifies the sensitivity of this approximation to the interpolation error.

Let $\lambda^i_{\max}:=\max_{u^i\in\{0,1,2\}}\{\lambda_u^i\}, \ \forall i\in\{1,2,\dots,N\}.$
	
\begin{theorem}\label{th:UB-for-whittle-error}
Let the subsystem $i$ be indexable. With reference to Algorithms \ref{alg:whittle-exact} and \ref{alg:whittle-boundary-interp}, for the subsystem $i$, the approximation error satisfies:
\[
\bigl|\widetilde{W}^i(s^i)-W^i(s^i)\bigr|
\leq\frac{2\gamma\lambda^i_{\max}}{1-\gamma}
\bigl|\widehat{W}^i(s^i)-W^i(s^i)\bigr|.
\]
\end{theorem}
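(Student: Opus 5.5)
Let me write $F(W):=\max_{u\in\{0,1,2\}}\Phi_u(W)$, where
\[
\Phi_u(W):=\gamma\lambda^i_u\bigl[V(s^{i,+1},W)-V(s^{\prime i}_u,W)\bigr]-c^i_u,
\]
so that, by construction in Algorithm \ref{alg:whittle-boundary-interp}, $\widetilde W^i(s^i)=F(\widehat W^i(s^i))$. Under indexability, the characterization \eqref{eq:whittle-bellman} says that the exact Whittle index is a fixed point of the same map, $W^i(s^i)=F(W^i(s^i))$. The plan is to write
\[
\bigl|\widetilde W^i(s^i)-W^i(s^i)\bigr|=\bigl|F(\widehat W^i(s^i))-F(W^i(s^i))\bigr|
\]
and show that $F$ is Lipschitz in $W$ with constant $2\gamma\lambda^i_{\max}/(1-\gamma)$.

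For the Lipschitz bound I use two facts. First, a maximum of functions is Lipschitz with the largest of their Lipschitz constants: $|\max_u\Phi_u(W)-\max_u\Phi_u(W')|\le\max_u|\Phi_u(W)-\Phi_u(W')|$. Second, for each fixed $u$,
\[
|\Phi_u(W)-\Phi_u(W')|\le\gamma\lambda^i_u\Bigl(|V(s^{i,+1},W)-V(s^{i,+1},W')|+|V(s^{\prime i}_u,W)-V(s^{\prime i}_u,W')|\Bigr),
\]
since the cost $c^i_u$ cancels. Each of the two value differences is bounded by $\|V(\cdot,W)-V(\cdot,W')\|_\infty\le|W-W'|/(1-\gamma)$, which is \eqref{eq:V-difference-bound}, already invoked in the proof of Theorem \ref{th:indexability}. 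Combining gives $|\Phi_u(W)-\Phi_u(W')|\le 2\gamma\lambda^i_u|W-W'|/(1-\gamma)$. Bounding $\lambda^i_u\le\lambda^i_{\max}$ then yields the claimed constant. Setting $W=\widehat W^i(s^i)$ and $W'=W^i(s^i)$ finishes the argument.

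The only delicate point is justifying the fixed-point identity $W^i(s^i)=F(W^i(s^i))$, i.e.\ \eqref{eq:whittle-bellman}. This is where indexability enters. At $W=W^i(s^i)$, the state lies on the boundary of $P^i(W)$, so $\min_u\Delta^i_{u3}(s^i,W)=0$. This uses continuity of $V(\cdot,W)$ in $W$, which again follows from \eqref{eq:V-difference-bound}, together with the infimum definition \eqref{eq:whittle-index} and the monotonicity of $P^i(W)$. By \eqref{eq:deltas-relaxed}, the condition $\min_u\Delta^i_{u3}=0$ is equivalent to $W=\max_u\Phi_u(W)$.

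A smaller point is that the sup-norm bound \eqref{eq:V-difference-bound} requires the value functions to be compared on the same (truncated, finite) state space $\mathcal S_{A^i}$. Both quantities are computed there, so the estimate applies directly.
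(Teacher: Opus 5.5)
Your proof is correct and follows the paper's argument. The paper likewise writes $\widetilde W^i(s^i)-W^i(s^i)$ as the difference of the two maxima in \eqref{eq:whittle-bellman}, evaluated at $\widehat W^i(s^i)$ and at $W^i(s^i)$. It then bounds this by $\gamma\lambda^i_{\max}$ times the sum of two value-function differences and applies \eqref{eq:V-difference-bound} to each. Your continuity-based justification of the fixed-point identity is sound, and the paper simply asserts that step. Note that it only needs $\{W: s^i\in P^i(W)\}$ to be closed and its infimum to be finite; it does not use the monotonicity of $P^i(W)$.
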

	
\begin{proof}
We refer to subsystem $i$, and we omit index $i$ to simplify notations.

From \eqref{eq:whittle-bellman} and from the definition of $\widetilde{W}(s)-W(s)$ in our heuristic policy,
\[
\begin{aligned}
\bigl|\widetilde{W}(s)-W(s)\bigr|=\bigl|\max_{u\in\{0,1,2\}}\Bigl[
&\gamma\lambda_u V\bigl(s^{+1}, \widehat W(s)\bigr)\\
&-\gamma\lambda_u V\bigl(s'_u, \widehat W(s)\bigr)-c_u\Bigr]\\
&-\max_{u\in\{0,1,2\}}\Bigl[\gamma\lambda_uV(s^{+1},W(s))\\&-\gamma\lambda_uV(s'_u,W(s))-c_u\Bigr]\bigr|.
\end{aligned}
\]
By the triangular inequality we can write
\[
\begin{aligned}
\bigl|\widetilde{W}(s)-W(s)\bigr|
&\leq \gamma\max_{u\in\{0,1,2\}}\lambda_u
\Bigl|\\
&\qquad V\bigl(s^{+1},\widehat{W}(s)\bigr)-V\bigl(s^{+1},W(s)\bigr)\\
&\qquad + V\bigl(s'_u,W(s)\bigr)-V\bigl(s'_u,\widehat{W}(s)\bigr)\Bigr|\\
&\leq \gamma\max_{u\in\{0,1,2\}}\lambda_u
\Bigl[\\
&\qquad \bigl|V\bigl(s^{+1},\widehat{W}(s)\bigr)-V\bigl(s^{+1},W(s)\bigr)\bigr|\\
&\qquad + \bigl|V\bigl(s'_u,W(s)\bigr)-V\bigl(s'_u,\widehat{W}(s)\bigr)\bigr|\Bigr].
\end{aligned}
\]
An upper bound for both differences in the value functions is again given by \eqref{eq:V-difference-bound}:
\[
\left\|V\bigl(s,\widehat{W}(s)\bigr)-V\bigl(s,W(s)\bigr)\right\|_\infty
\leq\frac{\bigl|\widehat{W}(s)-W(s)\bigr|}{1-\gamma}.
\]
Therefore,
\[
\begin{aligned}
\bigl|\widetilde{W}(s)-W(s)\bigr|&\leq\;2\gamma\lambda_{\max}\left\|V\bigl(\widehat{W}(s)\bigr)-V\bigl(W(s)\bigr)\right\|_\infty\\
&\leq\frac{2\gamma\lambda_{\max}}{1-\gamma}\,\bigr|\widehat{W}(s)-W(s)\bigl|.
\end{aligned}
\]
\end{proof}

\begin{remark}

Theorem \ref{th:UB-for-whittle-error} can be leveraged to get a more formal upper bound for the interpolation error. Let $H^{i,b}$ be the function defined by:
\[
H^{i,b}:=\max_{a=b,\dots,A_i-2}\bigl|W^i(a+2,b)-2W^i(a+1,b)+W^i(a,b)\bigr|
\]
for every truncated subsystem $i$. Since the row $\{(a,b):a=b,\dots,A^i\}$ is finite, $H^{i,b}$ is well defined.
Now consider the interpolation
\[
\begin{aligned}
\widehat W^i(a,b)&=W^i(b,b)+\frac{a-b}{A^i-b}
\bigl(W^i(A^i,b)-W^i(b,b)\bigr),\\
&\hspace{2em} b\leq a\leq A^i.
\end{aligned}
\]
One can show that
\[\bigl|\widehat W^i(a,b)-W^i(a,b)\bigr|\leq\frac{H^{i,b}}{2}(a-b)(A^i-a).
\]
Hence, by Theorem \ref{th:UB-for-whittle-error}, the approximation produced by one Whittle update satisfies
\begin{equation}\label{eq:H-bound}
\bigl|\widetilde{W}^i(a,b)-W^i(a,b)\bigr|\leq\frac{\gamma\lambda^i_{\max}}{1-\gamma}
\,H^{i,b}\,(a-b)(A^i-a).
\end{equation}

The bound defined in \eqref{eq:H-bound} is zero at the states $(b,b)$ and $(A_i,b)$, and is maximized at the midpoint of the interval $[b, A_i]$, since the factor $(a-b)(A^i-a)$ is a concave quadratic function of $a$.

\end{remark}

One may want to apply this remark to optimize their heuristic solution a posteriori. For example, if the numerically computed $H^{i,b}$ has a high curvature for a certain $b$, splitting the interval $[b, A_i]$ for interpolation is the theoretically optimal way to minimize the error upper bound, and might lead to significant improvements.

\section{Numerical Results}\label{sec:numerical-results}
After providing theoretical foundations in Sections \ref{sec:single-source-scenario} and \ref{sec:multi-source-scenario}, in this section we conduct several numerical analyses to corroborate our findings, addressing both the single process-monitor and the multiple process-monitor pairs scenarios.
	
\subsection{Single process-monitor Scenario}

We begin with the single process-monitor problem, for which the analysis in Section \ref{sec:single-source-scenario} proves an ordered policy with two switching thresholds. The purpose of this numerical study is to show how the interaction between the monitor AoI and the base station AoI shapes the optimal three-action policy, and in particular how the state space is divided into sensing, communication, and joint sensing and communication regions.

We solve a truncated version of the MDP on the triangular state space $\alpha^m,\alpha^b\in\{1,2,\dots,A\}$, $\alpha^b\leq\alpha^m$. The optimal value function is computed via value iteration. In the experiment reported below, we set $A=50$, $\gamma=0.85$, $\lambda_0=0.75$, $\lambda_1=0.95$, $\lambda_2=0.65$, $c_0=5$, $c_1=5.5$, and $c_2=6$.
\begin{figure}[ht]
\includegraphics[width=\linewidth]{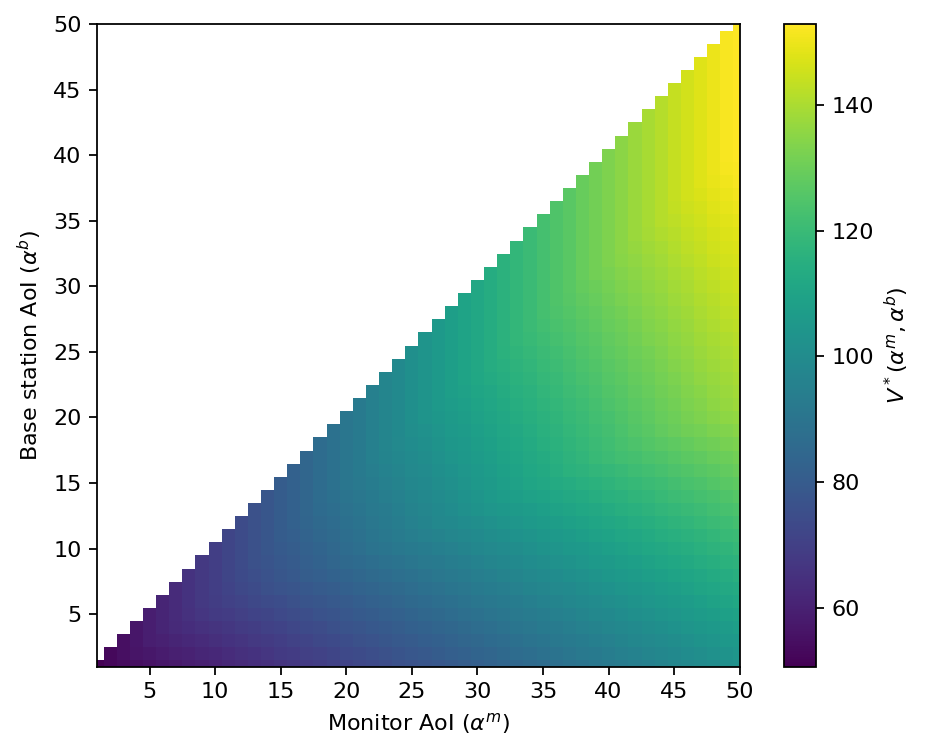}
\centering
\caption{Value function as a function of the monitor and base station AoIs. The value function is non-decreasing in both coordinates and exhibits a structured surface induced by the optimal policy.}
\label{fig:v-heatmap}
\end{figure}
\begin{figure}[ht]
\includegraphics[width=\linewidth]{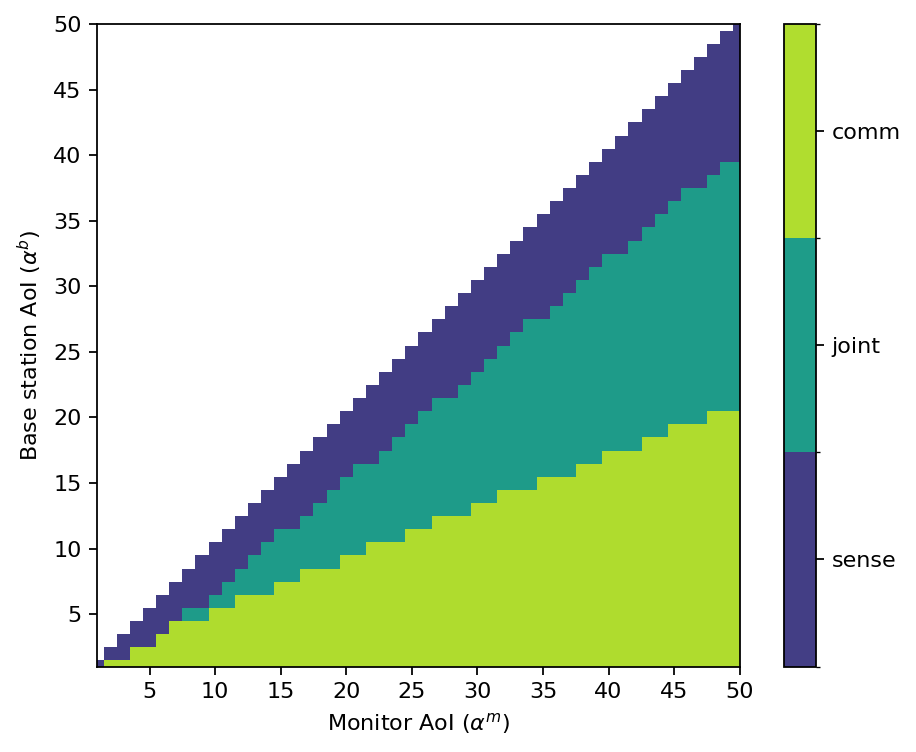}
\centering
\caption{Optimal ISAC action map as a function of the monitor and base station AoIs. The boundaries separating the optimal actions exhibit a threshold structure, consistent with the theoretical results.}
\label{fig:policy-heatmap}
\end{figure}

Figure \ref{fig:v-heatmap} represents the optimal value function $V^*$ over the truncated state space. The value function increases monotonically as either AoI component grows, which is consistent with the fact that stale information leads to a larger long-term cost. Moreover, the dependence on $\alpha^m$ is stronger than that on $\alpha^b$, since the stage cost penalizes the monitor AoI directly, whereas the base station AoI affects performance more indirectly through the quality of the information available for future transmissions.

Figure \ref{fig:policy-heatmap} shows the corresponding optimal action map. Near the diagonal, sensing is optimal, since when the monitor and the base station hold information of comparable age it is preferable to refresh the local estimate before allocating resources to transmission. By contrast, when the monitor AoI becomes sufficiently large, communication is preferred because reducing the age at the monitor becomes the priority. Between these two regimes, there exists an intermediate region in which the joint sensing and communication action is optimal, reflecting the fact that neither pure sensing nor pure communication alone provides the best compromise.

For every fixed value of $\alpha^b$, the optimal action evolves according to the ordered pattern $0\to 2\to 1$, and the lower switching boundary follows the monotonic trend established in Section \ref{sec:single-source-scenario}.

\subsection{Multiple source-monitor pairs Scenario}
For the multiple source-monitor scenario, we evaluate the performance of our approximate Whittle index policy (AWIP) under two different regimes. In the first regime, the sufficient condition for indexability is satisfied, so that the Whittle index policy (WIP) is well defined and can be used as a benchmark. In the second regime, the sufficient condition is violated, so indexability is not guaranteed. In both regimes, we also simulate:
\begin{itemize}
    \item A random policy, which selects $M$ subsystems uniformly at random;
    \item A greedy policy, which selects the $M$ subsystems with the highest $\alpha^m$.
\end{itemize}
All policies apply the local optimal active action once a subsystem is selected, i.e. the action $u\in\{0,1,2\}$ that minimizes $Q_u$ in that subsystem's state.

We consider a truncated state space $\alpha^m,\alpha^b\in\{1,2,\dots,A\}$, $\alpha^b\leq\alpha^m$, with $A=50$ for every subsystem. The local optimal value functions are computed via value iteration. We fix two classes of parameters. The first class has higher action costs and success probabilities ($\lambda_0=0.95$, $\lambda_1=0.98$, $\lambda_2=0.90$, $c_0=7$, $c_1=7$, and $c_2=7$); the second class has lower action costs and success probabilities ($\lambda_0=0.60$, $\lambda_1=0.80$, $\lambda_2=0.55$, $c_0=5$, $c_1=5$, and $c_2=5$).

In our first analysis, we carry out a scalability analysis over $N$. The $N$ sources are split equally in the two classes of parameters. The performance metric is the average discounted cost per source:
\[
J(N):=\frac{1-\gamma}{N}\,\sum_{k=0}^\infty\gamma^k\sum_{i=1}^Ng(S^i_k,u^i_k).
\]
In the online simulations, the infinite-horizon sum is approximated by truncating the evolution at $K_{\max}=200$, and each experiment is repeated $10000$ times.

We first consider a parameter configuration satisfying the sufficient condition for indexability, with $\gamma=0.5$. In this regime, we compare WIP, AWIP, the random policy, and the greedy policy. We fix $M=N/2$, i.e. the base station can address at most half of the total sources. The objective is to verify that AWIP closely tracks the performance of WIP when the Whittle index is well defined, and to assess the gain provided by both index-based policies with respect to the baselines.

\begin{figure}[ht]
\centering
\includegraphics[width=\linewidth]{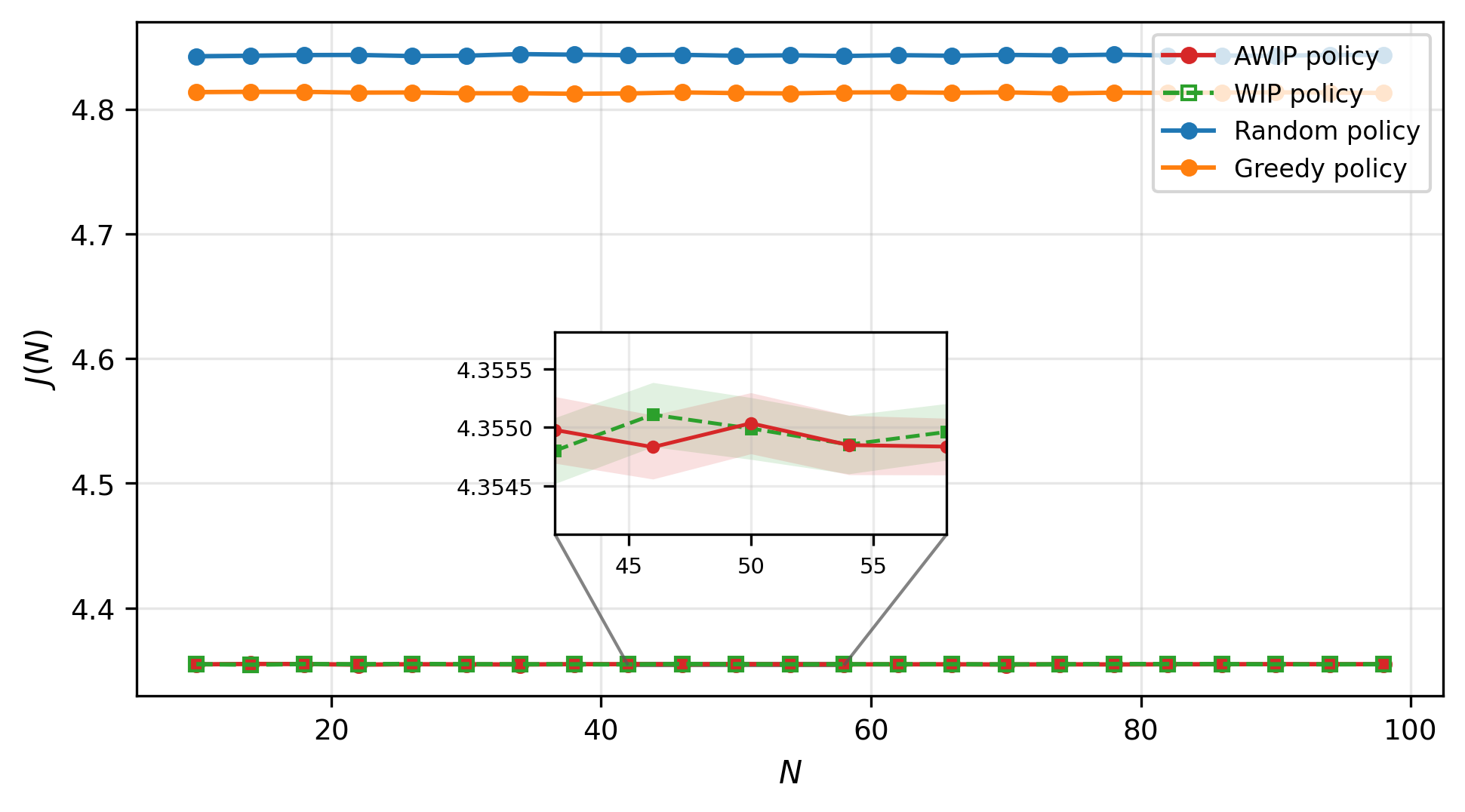}
\caption{Average discounted cost per source $J$ with varying $N$ in the indexable regime: comparison between WIP, AWIP, the random policy, and the greedy policy. $99\%$ confidence intervals are also displayed.}
\label{fig:indexable-multisource}
\end{figure}

Figure \ref{fig:indexable-multisource} shows that, when the sufficient condition for indexability is satisfied, AWIP achieves a performance practically identical to that of WIP, while both clearly outperform the baselines. This supports the use of AWIP as a low-complexity surrogate of WIP in regimes where the exact Whittle index is well defined. Offline, the computation times of the indices for the WIP and the AWIP, shown in Table \ref{tab:offline-wip-awip-timing}, confirm the convenience of using the interpolated indices as the truncation value $A$ grows.

\begin{table}[ht]
\centering
\caption{Offline computation time for WIP and AWIP indices.}
\label{tab:offline-wip-awip-timing}
\begin{tabular}{rrrr}
\toprule
$A$ & WIP (s) & AWIP (s) & Time saving (\%) \\
\midrule
10 & 2.773 & 1.040 & 62.5 \\
20 & 14.32 & 3.421 & 76.1 \\
30 & 39.22 & 7.637 & 80.5 \\
40 & 95.13 & 13.97 & 85.3 \\
50 & 179.7 & 23.81 & 86.7 \\
\bottomrule
\end{tabular}
\end{table}

To complement the large-scale simulations, we also compare the proposed policies with the globally optimal policy on a smaller instance where the full truncated dynamic program associated with \eqref{eq:multi-source-problem} can still be solved exactly. We consider $N=4$ and $M=2$, and repeat the analysis for $A\in\{10,12,15\}$. We use the same two classes of parameters as above. The global optimal value functions are computed via value iteration. Table \ref{tab:exact-DP} reports the centralized optimal cost $J_{\rm DP}$ and the relative gaps of WIP, AWIP, greedy, and random policies.

\begin{table}[ht]
\centering
\caption{Small-scale comparison with the globally optimal centralized policy.}
\label{tab:exact-DP}
\scriptsize
\setlength{\tabcolsep}{3.5pt}
\begin{tabular}{crrrrr}
\toprule
& & \multicolumn{4}{c}{Gap vs. $J_{\rm DP}$ (\%)} \\
\cmidrule(lr){3-6}
$A$ & $J_{\rm DP}$ & WIP & AWIP & Greedy & Random \\
\midrule
10 & 4.354528 & 0.00 & 0.00 & 9.37 & 11.22 \\
12 & 4.354903 & 0.00 & 0.00 & 9.36 & 11.21 \\
15 & 4.354907 & 0.00 & 0.00 & 9.36 & 11.21 \\

\bottomrule
\end{tabular}
\vspace{1mm}
\begin{flushleft}
\end{flushleft}
\end{table}

Table \ref{tab:exact-DP} shows that, in the indexable regime, both WIP and AWIP coincide with the globally optimal policy up to the displayed numerical tolerance. This confirms that the proposed approximate index captures almost all of the scheduling gain of the centralized optimal policy, while retaining the simple online structure used in the large scale simulations.

We then consider a second parameter configuration in which the sufficient condition for indexability is not satisfied, with $\gamma=0.9$. In this case, we compare AWIP with the random policy and the greedy policies. We fix $M=N/5$, i.e. the base station can address at most one fifth of the total sources. The purpose of this experiment is to show that AWIP remains a competitive heuristic beyond the regime covered by the sufficient condition for indexability.

\begin{figure}[ht]
\centering
\includegraphics[width=\linewidth]{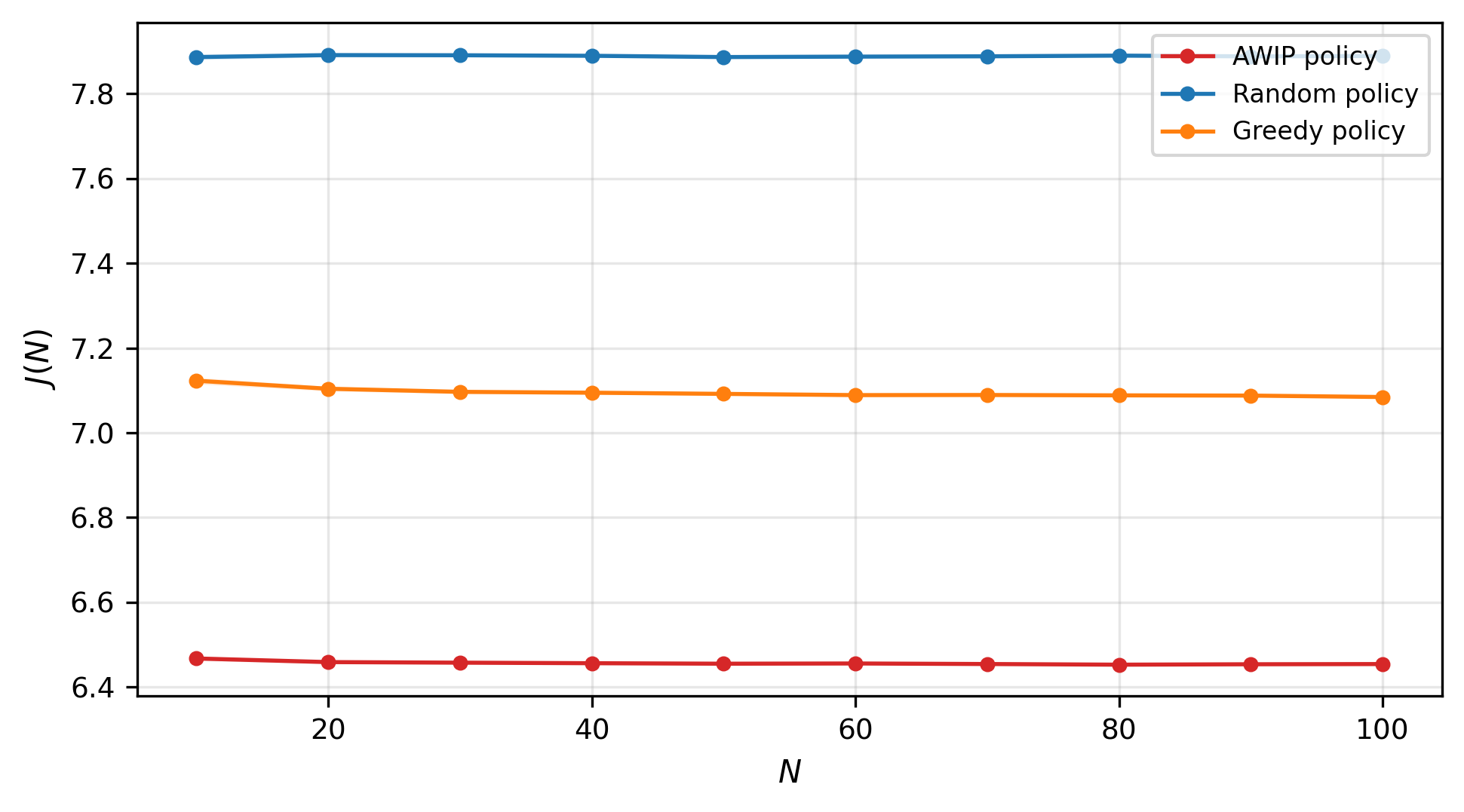}
\caption{Average discounted cost per source $J$ with varying $N$ when the sufficient condition for indexability is violated: comparison between AWIP, the random policy, and the greedy policy. $99\%$ confidence intervals are also displayed.}
\label{fig:nonindexable-multisource}
\end{figure}

Figure \ref{fig:nonindexable-multisource} illustrates that AWIP continues to outperform the baselines when the sufficient condition for indexability is violated. This indicates that the proposed approximation is not only computationally attractive, but also effective in parameter regimes where the Whittle index might not be defined.

\section{Conclusions}\label{sec:conclusions}
In this paper, we studied sensing-communication scheduling in an ISAC architecture for status updating under AoI-based freshness objectives. We formulated the single process-monitor problem as a discounted infinite-horizon Markov decision process, and established that the optimal stationary policy admits a two-threshold structure in the AoI state space, with a monotone lower threshold. Since the AoI state space unbounded, we  quantified the error induced by the truncation of the state space, providing a simple criterion for selecting the truncation level in numerical implementations. For the multiple process-monitor pairs scenario, we formulated the problem as a restless multi-armed bandit and developed scheduling policies based on the Whittle index, including a low-complexity policy with strong performance and a provable approximation bound. Lastly, we carried out numerical analyses to confirm our theoretical findings, showing the threshold geometry of the optimal single-source policy and the effectiveness of the proposed index policies in the multi-source setting.

\bibliographystyle{IEEEtran}
\bibliography{bibliography}

\newpage
\appendices

\section{Proof of Lemma \ref{lemma:Vstar-horizontal-lower-bound}}
\label{app:Vstar-horizontal-lower-bound}

Let $V^{(0)}\equiv 0$ and define $V^{(n+1)}=TV^{(n)}$ for $n\geq 0$. For each $n\geq 0$, set
\[
L_n:=\inf_{\alpha^m\geq \alpha^b\geq 1}\bigl[V^{(n)}(\alpha^m+1,\alpha^b)-V^{(n)}(\alpha^m,\alpha^b)\bigr].
\]
Clearly, $L_0=0$. Since $V^{(n)}$ is coordinatewise nondecreasing, $L_n\geq 0$ for every $n\geq 0$.

Fix $\alpha^m\ge \alpha^b\ge 1$. From the state evolution in
\eqref{eq:state-evolution},
\begin{align*}
Q_0^{(n)}&(\alpha^m+1,\alpha^b)-Q_0^{(n)}(\alpha^m,\alpha^b)\\
&=1+\gamma\lambda_0\Bigl[V^{(n)}(\alpha^m+2,1)-V^{(n)}(\alpha^m+1,1)\Bigr]\\
&\quad+\gamma(1-\lambda_0)\Bigl[V^{(n)}(\alpha^m+2,\alpha^b+1)\\
&\quad -V^{(n)}(\alpha^m+1,\alpha^b+1)\Bigr],\\
Q_1^{(n)}&(\alpha^m+1,\alpha^b)-Q_1^{(n)}(\alpha^m,\alpha^b)\\
&=1+\gamma(1-\lambda_1)\Bigl[V^{(n)}(\alpha^m+2,\alpha^b+1)\\
&\quad -V^{(n)}(\alpha^m+1,\alpha^b+1)\Bigr],\\
Q_2^{(n)}&(\alpha^m+1,\alpha^b)-Q_2^{(n)}(\alpha^m,\alpha^b)\\
&=1+\gamma(1-\lambda_2)\Bigl[V^{(n)}(\alpha^m+2,\alpha^b+1)\\
&\quad -V^{(n)}(\alpha^m+1,\alpha^b+1)\Bigr].
\end{align*}
By definition of $L_n$, every horizontal increment of $V^{(n)}$ is at least $L_n$. Since $\lambda_2\leq\lambda_0\leq\lambda_1$ and $L_n\geq 0$, each of the three differences above is bounded below by $1+\gamma(1-\lambda_1)L_n$.
Using $\min_i x_i-\min_i y_i\geq \min_i (x_i-y_i)$, we get $L_{n+1}\geq 1+\gamma(1-\lambda_1)L_n$. Hence,
\[
L_n\geq \sum_{k=0}^{n-1}\big(\gamma(1-\lambda_1)\big)^k
=\frac{1-\big(\gamma(1-\lambda_1)\big)^n}{1-\gamma+\gamma\lambda_1}.
\]
Pointwise convergence of $V^{(n)}$ to $V^*$ ensures
\[
V^*(\alpha^m+1,\alpha^b)-V^*(\alpha^m,\alpha^b)\geq \frac{1}{1-\gamma+\gamma\lambda_1},
\]
for every $\alpha^m\geq \alpha^b\geq 1$.\qed

\section{Proof of Lemma \ref{lemma:Vstar-local-upper-bound}}
\label{app:Vstar-local-upper-bound}

Fix $(\alpha^m,\alpha^b)\in\mathcal S$ and $(\tilde\alpha^m,\tilde\alpha^b)\in\mathcal S$ such that
\[
0\leq \tilde\alpha^m-\alpha^m\leq 1,\qquad 0\leq \tilde\alpha^b-\alpha^b\leq 1.
\]
Let $V^{(0)}\equiv 0$ and define $V^{(n+1)}=TV^{(n)}$ for $n\geq 0$. For each $n\geq 0$, set
\[
U_n:=\sup\left\{
\begin{aligned}
&V^{(n)}(\tilde\alpha^m,\tilde\alpha^b)-V^{(n)}(\alpha^m,\alpha^b):\\
&\alpha^m\geq \alpha^b\geq 1,\;
0\leq \tilde\alpha^m-\alpha^m\leq 1,\\
&0\leq \tilde\alpha^b-\alpha^b\leq 1
\end{aligned}
\right\}.
\]
Clearly, $U_0=0$. For every $u\in\{0,1,2\}$, the stage-cost difference satisfies
\[
g\bigl((\tilde\alpha^m,\tilde\alpha^b),u\bigr)-g\bigl((\alpha^m,\alpha^b),u\bigr)\leq 1.
\]
Moreover, under the same action $u$ and the same realization of $\eta$, the corresponding next states also differ by at most one in each coordinate. Therefore,
\[
Q_u^{(n)}(\tilde\alpha^m,\tilde\alpha^b)-Q_u^{(n)}(\alpha^m,\alpha^b)\leq 1+\gamma U_n.
\]
Using $V^{(n+1)}(s)=\min_{u\in\{0,1,2\}} Q_u^{(n)}(s)$ and $\min_i x_i-\min_i y_i\leq \max_i (x_i-y_i)$, we get $U_{n+1}\leq 1+\gamma U_n$. By iteration,
\[
U_n\leq \sum_{k=0}^{n-1}\gamma^k=\frac{1-\gamma^n}{1-\gamma}.
\]
Pointwise convergence of $V^{(n)}$ to $V^*$ ensures
\[
V^*(\tilde\alpha^m,\tilde\alpha^b)-V^*(\alpha^m,\alpha^b)\leq \frac{1}{1-\gamma}.
\]
Nonnegativity follows immediately from $V^*$ being coordinatewise nondecreasing.\qed

\section{Proof of Lemma \ref{lemma:ordered-cost-diagonal-zero}}
\label{app:ordered-cost-diagonal-zero}

At $(\alpha,\alpha)$, the Bellman equation gives
\begin{align*}
Q_0^*(\alpha,\alpha)
&=\alpha+c_0+\gamma\lambda_0V^*(\alpha+1,1)\\
&\quad+\gamma(1-\lambda_0)V^*(\alpha+1,\alpha+1),\\
Q_1^*(\alpha,\alpha)
&=\alpha+c_1+\gamma V^*(\alpha+1,\alpha+1),\\
Q_2^*(\alpha,\alpha)
&=\alpha+c_2+\gamma\lambda_2V^*(\alpha+1,1)\\
&\quad+\gamma(1-\lambda_2)V^*(\alpha+1,\alpha+1).
\end{align*}
Hence
\[
\begin{aligned}
Q_0^*(\alpha,\alpha)-Q_1^*(\alpha,\alpha)
&=(c_0-c_1)-\gamma\lambda_0\bigl(V^*(\alpha+1,\alpha+1)\\
&\qquad -V^*(\alpha+1,1)\bigr)\leq 0
\end{aligned}
\]
and
\[
\begin{aligned}
Q_0^*(\alpha,\alpha)-Q_2^*(\alpha,\alpha)
&=(c_0-c_2)-\gamma(\lambda_0-\lambda_2)\\
&\quad\cdot\bigl(V^*(\alpha+1,\alpha+1)\\
&\qquad -V^*(\alpha+1,1)\bigr)\leq 0,
\end{aligned}
\]
where we used $c_0\leq c_1\leq c_2$, $\lambda_2\leq \lambda_0$, and Lemma \ref{lemma:Vstar-in-F}. Therefore action $0$ is optimal on the diagonal.\qed

\section{Proof of Lemma \ref{lemma:ordered-cost-diagonal-gap}}
\label{app:ordered-cost-diagonal-gap}

Iterating \eqref{eq:ordered-cost-diagonal-recursion} $n$ times yields
\[
\begin{aligned}
B(\alpha^b)&=\sum_{j=0}^{n-1} \gamma^j(1-\lambda_0)^j
\bigl[1+\gamma\lambda_0A(\alpha^b+j)\bigr]\\
&\quad+\gamma^n(1-\lambda_0)^n B(\alpha^b+n).
\end{aligned}
\]
By Lemma \ref{lemma:Vstar-local-upper-bound}, $0\leq B(\alpha^b)\leq \frac{1}{1-\gamma}$ for every $\alpha^b$, so the sequence $(B(\alpha^b))_{\alpha^b\geq 1}$ is bounded. Since $\gamma(1-\lambda_0)<1$, letting $n\to\infty$ gives
\[
B(\alpha^b)=\sum_{j=0}^{\infty} \gamma^j(1-\lambda_0)^j\bigl[1+\gamma\lambda_0A(\alpha^b+j)\bigr].
\]
By Lemma \ref{lemma:Vstar-in-F}, the sequence $(A(\alpha^b))_{\alpha^b\geq 1}$ is nonincreasing. Hence $A(\alpha^b+j)\le A(\alpha^b)$ for every $j\in\mathbb N_0$, and therefore
\[
B(\alpha^b)\leq \sum_{j=0}^{\infty} \gamma^j(1-\lambda_0)^j\bigl[1+\gamma\lambda_0A(\alpha^b)\bigr]=\frac{1+\gamma\lambda_0A(\alpha^b)}{1-\gamma+\gamma\lambda_0}.
\]
Subtracting $A(\alpha^b)$ from both sides, we obtain
\[
B(\alpha^b)-A(\alpha^b)\leq \frac{1-(1-\gamma)A(\alpha^b)}{1-\gamma+\gamma\lambda_0}.
\]
Since $A(\alpha^b)\geq 0$, it follows that
\[
B(\alpha^b)-A(\alpha^b) \leq \frac{1}{1-\gamma+\gamma\lambda_0} < \frac{1}{\gamma\lambda_0}.
\]\qed

\section{Proof of Lemma \ref{lemma:ordered-cost-vertical-bound}}
\label{app:ordered-cost-vertical-bound}
Let
\[
M(\alpha^b):=\sup_{\alpha^m\ge \alpha^b+1} C(\alpha^m,\alpha^b).
\]
Fix $\alpha^b\ge 1$ and $\alpha^m\ge \alpha^b+1$. Using the Bellman equation and $\min_i x_i-\min_i y_i\le \max_i(x_i-y_i)$, we obtain
\begin{align*}
C(\alpha^m,\alpha^b)
&\le \max\Bigl\{Q_0^*(\alpha^m+1,\alpha^b+2)\\
&\quad -Q_0^*(\alpha^m+1,\alpha^b+1),\\
&\quad Q_1^*(\alpha^m+1,\alpha^b+2)\\
&\quad -Q_1^*(\alpha^m+1,\alpha^b+1),\\
&\quad Q_2^*(\alpha^m+1,\alpha^b+2)\\
&\quad -Q_2^*(\alpha^m+1,\alpha^b+1)\Bigr\}.
\end{align*}

For $u=0$,
\begin{align*}
&Q_0^*(\alpha^m+1,\alpha^b+2)-Q_0^*(\alpha^m+1,\alpha^b+1)\\
&\qquad=\gamma(1-\lambda_0)\Bigl[V^*(\alpha^m+2,\alpha^b+3)\\
&\qquad\quad -V^*(\alpha^m+2,\alpha^b+2)\Bigr]\\
&\qquad=\gamma(1-\lambda_0)\,C(\alpha^m+1,\alpha^b+1).
\end{align*}

For $u=1$,
\begin{align*}
&Q_1^*(\alpha^m+1,\alpha^b+2)-Q_1^*(\alpha^m+1,\alpha^b+1)\\
&\quad=\gamma\lambda_1\Bigl[V^*(\alpha^b+3,\alpha^b+3)\\
&\qquad -V^*(\alpha^b+2,\alpha^b+2)\Bigr]\\
&\qquad+\gamma(1-\lambda_1)\Bigl[V^*(\alpha^m+2,\alpha^b+3)\\
&\qquad -V^*(\alpha^m+2,\alpha^b+2)\Bigr]\\
&\quad =\gamma\lambda_1 B(\alpha^b+2)\\
&\qquad+\gamma(1-\lambda_1)C(\alpha^m+1,\alpha^b+1).
\end{align*}

For $u=2$,
\begin{align*}
&Q_2^*(\alpha^m+1,\alpha^b+2)-Q_2^*(\alpha^m+1,\alpha^b+1)\\
&\quad=\gamma\lambda_2\Bigl[V^*(\alpha^b+3,1)-V^*(\alpha^b+2,1)\Bigr]\\
&\qquad+\gamma(1-\lambda_2)\Bigl[V^*(\alpha^m+2,\alpha^b+3)\\
&\qquad -V^*(\alpha^m+2,\alpha^b+2)\Bigr]\\
&\quad=\gamma\lambda_2 A(\alpha^b+1)\\
&\qquad+\gamma(1-\lambda_2)C(\alpha^m+1,\alpha^b+1).
\end{align*}
Taking the supremum over $\alpha^m\ge \alpha^b+1$, we obtain
\begin{align*}
M(\alpha^b)\le \max\Bigl\{
&\gamma(1-\lambda_0)M(\alpha^b+1),\\
&\gamma\lambda_1 B(\alpha^b+2)+\gamma(1-\lambda_1)M(\alpha^b+1),\\
&\gamma\lambda_2 A(\alpha^b+1)+\gamma(1-\lambda_2)M(\alpha^b+1)\Bigr\}.
\end{align*}
We compare each term in the right-hand side with $B(\alpha^b+1)$. First,
\begin{align*}
&\gamma(1-\lambda_0)M(\alpha^b+1)-B(\alpha^b+1)\\
&\qquad=\gamma(1-\lambda_0)\bigl(M(\alpha^b+1)-B(\alpha^b+2)\bigr)\\
&\qquad\quad+\bigl[\gamma(1-\lambda_0)B(\alpha^b+2)-B(\alpha^b+1)\bigr].
\end{align*}
From \eqref{eq:ordered-cost-diagonal-recursion}, the bracket is nonpositive. Hence
\[
\begin{aligned}
&\gamma(1-\lambda_0)M(\alpha^b+1)-B(\alpha^b+1)\\
&\qquad\leq \gamma(1-\lambda_0)\bigl(M(\alpha^b+1)-B(\alpha^b+2)\bigr).
\end{aligned}
\]
Second,
\begin{align*}
&\gamma\lambda_1B(\alpha^b+2)+\gamma(1-\lambda_1)M(\alpha^b+1)-B(\alpha^b+1)\\
&\qquad=\gamma(1-\lambda_1) \bigl(M(\alpha^b+1)-B(\alpha^b+2)\bigr)\\
&\qquad\quad +\bigl[\gamma B(\alpha^b+2)-B(\alpha^b+1)\bigr].
\end{align*}
Moreover,
\[
\gamma B(\alpha^b+2)-B(\alpha^b+1)
=
-1+\gamma\lambda_0\bigl(B(\alpha^b+2)-A(\alpha^b+1)\bigr).
\]
Since $(A(\alpha^b))_{\alpha^b\ge 1}$ is nonincreasing, Lemma 7 gives
\[
B(\alpha^b+2)-A(\alpha^b+1)\le B(\alpha^b+2)-A(\alpha^b+2)<\frac{1}{\gamma\lambda_0},
\]
so the bracket is strictly negative. Therefore
\[
\begin{aligned}
&\gamma\lambda_1B(\alpha^b+2)+\gamma(1-\lambda_1)M(\alpha^b+1)-B(\alpha^b+1)\\
&\qquad\le
\gamma(1-\lambda_1)\bigl(M(\alpha^b+1)-B(\alpha^b+2)\bigr).
\end{aligned}
\]
Third,
\begin{align*}
&\gamma\lambda_2A(\alpha^b+1)+\gamma(1-\lambda_2)M(\alpha^b+1)-B(\alpha^b+1)\\
&\qquad=\gamma(1-\lambda_2)
\bigl(M(\alpha^b+1)-B(\alpha^b+2)\bigr)\\
&\qquad\quad+\bigl[\gamma\lambda_2A(\alpha^b+1)
+\gamma(1-\lambda_2)B(\alpha^b+2)\\
&\qquad\quad -B(\alpha^b+1)\bigr].
\end{align*}
Again using \eqref{eq:ordered-cost-diagonal-recursion},
\[
\begin{aligned}
&\gamma\lambda_2A(\alpha^b+1)+\gamma(1-\lambda_2)B(\alpha^b+2)-B(\alpha^b+1)\\
&\qquad=-1+\gamma(\lambda_0-\lambda_2)
\bigl(B(\alpha^b+2)-A(\alpha^b+1)\bigr).
\end{aligned}
\]
By the same bound as above,
\[
\gamma(\lambda_0-\lambda_2)\bigl(B(\alpha^b+2)-A(\alpha^b+1)\bigr)
<
\frac{\lambda_0-\lambda_2}{\lambda_0}\le 1,
\]
hence this bracket is also strictly negative. Therefore
\[
\begin{aligned}
&\gamma\lambda_2A(\alpha^b+1)+\gamma(1-\lambda_2)M(\alpha^b+1)-B(\alpha^b+1)\\
&\qquad\le
\gamma(1-\lambda_2)\bigl(M(\alpha^b+1)-B(\alpha^b+2)\bigr).
\end{aligned}
\]

Combining the three bounds, we get
\begin{align*}
M(\alpha^b)-B(\alpha^b+1)
&\leq \max\Bigl\{\gamma(1-\lambda_0)\bigl(M(\alpha^b+1)\\
&\qquad -B(\alpha^b+2)\bigr),\\
&\quad \gamma(1-\lambda_1)\bigl(M(\alpha^b+1)\\
&\qquad -B(\alpha^b+2)\bigr),\\
&\quad \gamma(1-\lambda_2)\bigl(M(\alpha^b+1)\\
&\qquad -B(\alpha^b+2)\bigr) \Bigr\}.
\end{align*}
Taking positive parts, let $Y(\alpha^b) := \Bigl[M(\alpha^b)-B(\alpha^b+1)\Bigr]^+$. Then
\begin{align*}
Y(\alpha^b) &\leq \Biggl[\max\Bigl\{\gamma(1-\lambda_0)\bigl(M(\alpha^b+1)-B(\alpha^b+2)\bigr),\\
&\qquad\qquad \gamma(1-\lambda_1)\bigl(M(\alpha^b+1)-B(\alpha^b+2)\bigr),\\
&\qquad\qquad \gamma(1-\lambda_2)\bigl(M(\alpha^b+1)-B(\alpha^b+2)\bigr)\Bigr\}\Biggr]^+\\
&\leq \max\Bigl\{ \gamma(1-\lambda_0),\gamma(1-\lambda_1),\gamma(1-\lambda_2)\Bigr\}\\
&\qquad\cdot\Bigl[M(\alpha^b+1)-B(\alpha^b+2)\Bigr]^+.
\end{align*}
Since $\lambda_2\le \lambda_0\le \lambda_1$, this gives
\[
Y(\alpha^b)\le \gamma(1-\lambda_2)Y(\alpha^b+1),\qquad \forall \alpha^b\ge 1.
\]

By Lemma 5, both $C(\alpha^m,\alpha^b)$ and $B(\alpha^b+1)$ are bounded between $0$ and
$1/(1-\gamma)$, so $(Y(\alpha^b))_{\alpha^b\ge 1}$ is bounded. If $Y(\bar\alpha^b)>0$ for some
$\bar\alpha^b$, then iterating the previous inequality gives
\[
Y(\bar\alpha^b+n)\ge \frac{Y(\bar\alpha^b)}{[\gamma(1-\lambda_2)]^n},
\qquad \forall n\ge 1,
\]
which is impossible because $\gamma(1-\lambda_2)<1$ and $(Y(\alpha^b))_{\alpha^b\ge 1}$ is bounded.
Therefore $Y(\alpha^b)=0$ for every $\alpha^b\ge 1$, that is,
\[
M(\alpha^b)\le B(\alpha^b+1),\qquad \forall \alpha^b\ge 1.
\]
Since $C(\alpha^m,\alpha^b)\le M(\alpha^b)$ by definition of $M(\alpha^b)$, we conclude
\[
C(\alpha^m,\alpha^b)\le B(\alpha^b+1),
\]
for every $\alpha^b\ge 1$ and every $\alpha^m\ge \alpha^b+1$.\qed

\section{Proof of Lemma \ref{lemma:delta-non-decreasing-in-a_s}}
\label{app:delta-non-decreasing-in-a_s}

Fix $\alpha^b$. From \eqref{eq:deltas},
\begin{align*}
&\Delta^*_{01}(\alpha^m+1,\alpha^b)-\Delta^*_{01}(\alpha^m,\alpha^b)\\
&\quad=\gamma\lambda_0\bigl[V^*(\alpha^m+2,1)-V^*(\alpha^m+1,1)\bigr]\\
&\qquad+\gamma(\lambda_1-\lambda_0)\Bigl[
V^*(\alpha^m+2,\alpha^b+1)\\
&\qquad -V^*(\alpha^m+1,\alpha^b+1)\Bigr],
\end{align*}
which is nonnegative because $V^*$ is coordinatewise nondecreasing and $\lambda_1\geq \lambda_0\geq 0$.

For $\Delta^*_{02}$, again from \eqref{eq:deltas},
\begin{align*}
&\Delta^*_{02}(\alpha^m+1,\alpha^b)-\Delta^*_{02}(\alpha^m,\alpha^b)\\
&\quad=\gamma\lambda_0\bigl[V^*(\alpha^m+2,1)-V^*(\alpha^m+1,1)\bigr]\\
&\qquad+\gamma(\lambda_2-\lambda_0)\Bigl[
V^*(\alpha^m+2,\alpha^b+1)\\
&\qquad -V^*(\alpha^m+1,\alpha^b+1)\Bigr].
\end{align*}
By Lemma \ref{lemma:Vstar-horizontal-lower-bound},
\[
V^*(\alpha^m+2,1)-V^*(\alpha^m+1,1) \geq \frac{1}{1-\gamma+\gamma\lambda_1},
\]
and by Lemma \ref{lemma:Vstar-local-upper-bound},
\[
0\leq V^*(\alpha^m+2,\alpha^b+1)-V^*(\alpha^m+1,\alpha^b+1)\leq \frac{1}{1-\gamma}.
\]
Combining these bounds, $\Delta^*_{02}(\alpha^m+1,\alpha^b)-\Delta^*_{02}(\alpha^m,\alpha^b)$ is nonnegative if
\[
\frac{\lambda_0}{1-\gamma+\gamma\lambda_1} \geq \frac{\lambda_0-\lambda_2}{1-\gamma},
\]
which is equivalent to \eqref{eq:parameters-assumption}.

Finally,
\[
\begin{aligned}
&\Delta^*_{21}(\alpha^m+1,\alpha^b)-\Delta^*_{21}(\alpha^m,\alpha^b)\\
&\quad=\gamma(\lambda_1-\lambda_2)\Bigl[
V^*(\alpha^m+2,\alpha^b+1)\\
&\qquad -V^*(\alpha^m+1,\alpha^b+1)\Bigr]
\geq 0,
\end{aligned}
\]
because $\lambda_1\geq \lambda_2$ and $V^*$ is coordinatewise nondecreasing.\qed

\section{Proof of Lemma \ref{lemma:delta-non-increasing-in-a_b}}
\label{app:delta-non-increasing-in-a_b}

Fix $\alpha^m$. Using \eqref{eq:deltas}, for $\Delta^*_{01}$ we obtain
\begin{align*}
&\Delta^*_{01}(\alpha^m,\alpha^b+1)-\Delta^*_{01}(\alpha^m,\alpha^b)\\
&\quad=
-\gamma\lambda_1\Bigl[
V^*(\alpha^b+2,\alpha^b+2)-V^*(\alpha^b+1,\alpha^b+1)
\Bigr]\\
&\qquad
+\gamma(\lambda_1-\lambda_0)\Bigl[
V^*(\alpha^m+1,\alpha^b+2)\\
&\qquad -V^*(\alpha^m+1,\alpha^b+1)
\Bigr]\\
&\quad =\gamma\Bigl[(\lambda_1-\lambda_0)C(\alpha^m,\alpha^b)-\lambda_1B(\alpha^b+1)\Bigr].
\end{align*}
By Lemma \ref{lemma:ordered-cost-vertical-bound}, $C(\alpha^m,\alpha^b)\le B(\alpha^b+1)$, therefore
\begin{align*}
&\Delta^*_{01}(\alpha^m,\alpha^b+1)-\Delta^*_{01}(\alpha^m,\alpha^b)\\
&\quad\le
\gamma\Bigl[(\lambda_1-\lambda_0)B(\alpha^b+1)-\lambda_1B(\alpha^b+1)\Bigr]\\
&\quad=
-\gamma\lambda_0B(\alpha^b+1)\le 0,
\end{align*}
because $B(\alpha^b+1)\ge 0$ by coordinatewise monotonicity of $V^*$.

For $\Delta^*_{02}$, again using \eqref{eq:deltas}, we obtain
\begin{align*}
&\Delta^*_{02}(\alpha^m,\alpha^b+1)-\Delta^*_{02}(\alpha^m,\alpha^b)\\
&\quad= -\gamma\lambda_2\Bigl[V^*(\alpha^b+2,1)-V^*(\alpha^b+1,1)\Bigr]\\
&\qquad +\gamma(\lambda_2-\lambda_0)\Bigl[V^*(\alpha^m+1,\alpha^b+2)\\
&\qquad -V^*(\alpha^m+1,\alpha^b+1)\Bigr]\\
&\quad= -\gamma\lambda_2A(\alpha^b)+\gamma(\lambda_2-\lambda_0)C(\alpha^m,\alpha^b).
\end{align*}
Since $A(\alpha^b)\geq 0$, $C(\alpha^m,\alpha^b)\geq 0$, and $\lambda_2\leq \lambda_0$, it follows that
\[
\Delta^*_{02}(\alpha^m,\alpha^b+1)-\Delta^*_{02}(\alpha^m,\alpha^b)\leq 0.
\]

Hence both $\Delta^*_{01}(\alpha^m,\alpha^b)$ and $\Delta^*_{02}(\alpha^m,\alpha^b)$ are
nonincreasing in $\alpha^b$ for $1\le \alpha^b\le \alpha^m-1$.\qed

\end{document}